\documentclass[preprint,12pt]{elsarticle}
\usepackage[a4paper,left=2cm,right=2cm,top=2.2cm,bottom=2.2cm]{geometry}
\usepackage[utf8]{inputenc}
\usepackage{amsmath,amssymb,amsthm,mathtools}
\usepackage{graphicx}
\usepackage{booktabs}
\usepackage{listings}
\usepackage{algorithm}
\usepackage{algpseudocode}
\usepackage{tikz}
\usepackage{array}
\usepackage{multirow}
\usepackage{xcolor}
\usepackage{courier}
\usepackage{orcidlink}
\usepackage{subcaption}
\usepackage{hyperref}
\usepackage{float}
\usepackage{placeins}

\hypersetup{
    colorlinks=true,
    allcolors=blue,
    linkcolor=blue,
    citecolor=blue,
    urlcolor=blue
}

\makeatletter
\newcounter{algdouble}
\renewcommand{\thealgdouble}{\arabic{algdouble}}
\newenvironment{doublealgorithm}[1][]{%
  \begin{figure*}
  \refstepcounter{algdouble}
  \vspace{-1mm}
  \centering
  \noindent\rule{\textwidth}{0.8pt}\\[6pt]
  Algorithm~\thealgdouble:~\textbf{#1}\\[3pt]
  \rule{\textwidth}{0.4pt}\\[2pt]
}{%
  \vspace{-2mm}
  \noindent\rule{\textwidth}{0.4pt}
  \end{figure*}
}
\makeatother

\newcommand{\cyr}[1]{\textcolor{black}{#1}}
\newcommand{\rb}{]}
\usetikzlibrary{shapes.geometric,arrows,fit,backgrounds}
\tikzstyle{decision} = [diamond, minimum width=4cm, minimum height=1cm, text centered, draw=black, fill=red!10]
\tikzstyle{process}  = [rectangle, rounded corners, minimum width=4cm, minimum height=1.5cm, align=center, draw=black, fill=gray!20, inner sep=8pt]
\tikzstyle{arrow}    = [thick,->,>=stealth]

\definecolor{gray1}{gray}{1}
\definecolor{gray2}{gray}{0.9}
\definecolor{gray3}{gray}{0.8}
\definecolor{gray4}{gray}{0.7}
\definecolor{gray5}{gray}{0.6}
\definecolor{gray6}{gray}{0.3}
\definecolor{gray7}{gray}{0.1}

\theoremstyle{definition}
\newtheorem{definition}{Definition}
\theoremstyle{remark}
\newtheorem{remark}{Remark}
\theoremstyle{plain}
\newtheorem{theorem}{Theorem}
\newtheorem{lemma}[theorem]{Lemma}
\newtheorem{corollary}[theorem]{Corollary}

\makeatletter
\newcolumntype{C}[1]{>{\centering\arraybackslash}m{#1}}
\newcolumntype{L}[1]{>{\raggedright\arraybackslash}m{#1}}
\newcolumntype{R}[1]{>{\raggedleft\arraybackslash}m{#1}}
\makeatother

\usepackage{lineno}
\modulolinenumbers[5]

\journal{Applied Mathematical Modelling} % à adapter

\begin{document}

\begin{graphicalabstract}
\fbox{\parbox{\textwidth}{\centering\textbf{$\tilde{\mathtt{P}}^{\circlearrowleft}_{\mathtt{BLEND}}$ Operator and Diagram Illustrating the Process of Blending Cyclostationary Persistence and Traditional Persistence Models for Improved Forecasting.}}} \\[1em] 
\begin{figure} 
\centering
\begin{tikzpicture}[node distance=1.8cm]
    \node (start) [process, fill=gray1] {\parbox{6cm}{\textbf{Cyclostationary Time Series} \\ \tiny \centering Period $T$, Time Step $\Delta t$, horizon $n\Delta t$}};
    \node (pcalc) [process, below left of=start, xshift=-2cm, yshift=-2.6cm, fill=gray2] {\parbox{4cm}{\centering\textbf{Persistences} \\ \tiny  \parbox{4cm} {\centering$\mathtt{P}(I)(t)=I(t)$  $\mathtt{P}^{\circlearrowleft}(I)(t)=I(t - T + n\Delta t)$}}};
    \node (divide) [process, below right of=start, xshift=2cm, yshift=-0.5cm, fill=gray3] {\parbox{4cm}{\centering\textbf{Divide the Series} \\ \tiny \centering into $k = T / \Delta t$ Sub-Series}};
    \node (corr) [process, below right of=divide, xshift=-1.3cm, yshift=-0.8
    cm, fill=gray4] {\parbox{4cm}{\centering\textbf{Compute Correlations} \\ \tiny \centering $\rho_k(t, n\Delta t)$}};
    \node (comb) [draw, dashed, thick, rounded corners, fit=(corr)] {};
    \node (lambda) [process, below of=corr, yshift=-0.3cm, fill=gray6, text=white] {\parbox{6cm}{\centering\textbf{Compute $\lambda_k$} \\ \tiny \centering $ \frac{1}{2}(1 + \rho_k(t, n\Delta t)).$}};
    \node (pcomb) [process, below of=lambda, yshift=-0.1cm, fill=gray6, text=white] {\parbox{5.5cm}{\centering\textbf{Prediction with $\mathtt{P}^{\circlearrowleft}_{\mathtt{BLEND}}$} \\ \tiny \centering $\hat{I}_k(t + n\Delta t) = (1 - \lambda_k) \mathtt{P}^{\circlearrowleft}(I)(t) + \lambda_k \mathtt{P}(I)(t)$}};
    \draw [arrow] (start) -- (pcalc.north);
    \draw [arrow] (start) -- (divide.north);
    \draw [arrow] (pcalc) |- (pcomb);
    \draw [arrow] (divide) -- (comb);
    \draw [arrow] (comb.south) -- (lambda.north);
    \draw [arrow] (lambda) -- (pcomb);
\end{tikzpicture}
\end{figure}
\end{graphicalabstract}

\begin{highlights}
    \item[\(\checkmark\)] Extension of persistence models was adopted to solve cyclostationary processes.
    \item[\(\checkmark\)] The developed persistence combines cyclic and simple persistences.
    \item[\(\checkmark\)] The combination coefficients mathematical derivation controls periodic statistics.
    \item[\(\checkmark\)] Effective benchmarks were adopted for simplicity and complexity in forecasting models.
    \item[\(\checkmark\)] Synthetic validation and measured data-sets were demonstrated to have reliable accuracy.
\end{highlights}

\begin{frontmatter}

\title{Symmetry-Constrained Forecasting of Periodically Correlated Energy Processes}

\author[OIE]{Cyril Voyant\orcidlink{0000-0003-0242-7377}}
\ead{cyril.voyant@minesparis.psl.eu}
\author[OIE]{Candice Banes\orcidlink{0009-0003-6656-8168}}
\author[SPE]{Luis Garcia-Gutierrez\orcidlink{0000-0002-3480-1784}}
\author[SPE]{Gilles Notton\orcidlink{0000-0002-6267-9632}}
\author[KRAG,SPE]{Milan Despotovic\orcidlink{0000-0003-3144-5945}}
\author[KFUPM]{Zaher Mundher Yaseen\orcidlink{0000-0003-3344-214X}}
\address[OIE]{Mines Paris -- PSL University, Centre for Observation, Impacts, Energy (O.I.E.), 06904 Sophia Antipolis, France}
\address[SPE]{SPE Laboratory, UMR CNRS 6134, University of Corsica Pasquale Paoli, Ajaccio, France}
\address[KRAG]{Faculty of Engineering, University of Kragujevac, 6 Sestre Janjić, Kragujevac, Serbia}
\address[KFUPM]{King Fahd University of Petroleum and Minerals, Dhahran, Saudi Arabia}

\begin{abstract}
Time series in energy systems, such as solar irradiance, wind speed, or electrical load, are characterized by strong diurnal and seasonal periodicities. Accurate forecasting requires accounting for time varying statistical properties that stationary or classical persistence models cannot capture. A family of analytical forecasting operators for cyclostationary processes is introduced, extending persistence through a closed form coefficient $\tilde{\lambda}(t,\tau)=\tfrac{1}{2}\bigl(1+\rho(t,\tau)\bigr)$, where $\rho(t,\tau)$ denotes the local correlation between the current observation and its phase aligned time lag ($\tau$). This formulation preserves periodic variance and covariance, achieving a symmetry induced reduction of effective degrees of freedom. The resulting  operator defines a training free analytical limit of persistence under periodic non stationarity. Validation on synthetic cyclostationary signals and empirical renewable energy datasets demonstrates consistent accuracy gains over classical persistence, particularly at multi hour horizons. By embedding temporal symmetry into the prediction process, the  framework provides a physically interpretable, reproducible, and computationally minimal baseline for forecasting periodic processes across energy and complex systems.
\end{abstract}

\begin{keyword}
Cyclostationarity \sep Periodic Correlation \sep Persistence Forecasting \sep Solar Irradiance Forecasting \sep Renewable Energy Forecasting.
\end{keyword}

\end{frontmatter}
%\linenumbers
%%%%%%%%%%%%%%%%%%%%%%%%%%%%%%%%%%%%%%%%%%%%%%%%%%%%%%%%%%%%%%%%%%%%%%%%%%%%%%%%%%%%%%%%%%%%%%%%%%%%%%%
%                                                                                                     %
%                                          Introduction                                               %
%                                                                                                     %
%%%%%%%%%%%%%%%%%%%%%%%%%%%%%%%%%%%%%%%%%%%%%%%%%%%%%%%%%%%%%%%%%%%%%%%%%%%%%%%%%%%%%%%%%%%%%%%%%%%%%%%
\clearpage
\section{Introduction}
\label{sec:intro}

Time series in energy systems, including solar irradiance, wind speed, and electrical demand, exhibit strong daily and seasonal periodicities arising from astronomical and human life cycles \citep{PRXEnergy.1.013002}. Accurate forecasting of these signals is essential for grid stability, renewable integration, and energy market operation \citep{rich_h__inman__2013,maimouna_diagne__2013}. However, the underlying processes are intrinsically non-stationary: both mean and variance evolve with time in a nearly periodic manner. Most forecasting models (from classical statistical approaches to deep learning architectures) implicitly assume stationarity, capturing short-term dependencies while neglecting the periodic modulation of statistical moments \citep{voyant2017machine}. This structural mismatch limits both interpretability and generalization in operational contexts \citep{PRXEnergy.2.043003}.

Recent computational developments (notably deep learning and hybrid machine learning methods) have shown strong potential for modeling complex temporal structures in renewable-energy time series \citep{forootan2022machine,gbemou2021comparative}. Yet these models remain constrained by extensive training-data requirements and high computational costs, which limits real-time applicability \citep{lauret2015benchmarking,benali2019solar}. Analytical formulations based on cyclostationary processes provide an alternative by explicitly accounting for the periodic variation of statistical moments \citep{CARO2026112807,napolitano2019cyclostationary}. Periodic autoregressive ($\mathtt{PAR}$) models and exponential-smoothing families such as Holt–Winters ($\mathtt{HW}$) or Error–Trend–Seasonal ($\mathtt{ETS}$) \citep{tratar2016comparison}, together with more advanced frameworks like Theta \citep{assimakopoulos2000theta}, Prophet \citep{almazrouee2020forecasting}, $\mathtt{N\text{-}BEATS}$, and $\mathtt{TBATS}$ \citep{deLivera2011tbats}, can achieve high accuracy but at substantial computational complexity, limiting deployment in operational environments where efficiency and interpretability are essential. \cyr{Table~\ref{tab11} summarizes representative forecasting models with periodic or seasonal structure. For each method, a simplified predictive expression for $\hat{X}(t+n\Delta t)$ together with approximate training and forecasting complexities is reported for comparative purposes.}
        \renewcommand{\arraystretch}{1.3}
        %\captionsetup[longtable]{justification=raggedright,singlelinecheck=false,font=small}
        %\setlength{\LTcapwidth}{0.95\textwidth}
        %\setlength{\tabcolsep}{6pt}
        \scriptsize
        \begin{table} [H]
        %\captionsetup{justification=raggedright,singlelinecheck=false,width=0.95\textwidth}
        \caption{Simplified forecasting equations for periodic time series. $X_t$ is the measured value at time $t$; $n \Delta t$ is the forecast horizon ($n\geq1$); $T$ is the seasonal period; $p$ is the $\mathtt{AR}$ order; $k$ the number of Theta lines; \cyr{$N_B$} the block count; $d$ the network depth in $\mathtt{N\text{-}BEATS}$ \cyr{and $\theta_i$ denote trainable parameters while $b_i(t)$ learned basis functions defined by the network architecture; for $\mathtt{TBATS}$, $s(i,t)$ denotes model-specific seasonal components and $m$ the number of seasonal components}. For $\mathtt{HW}$,  $\mathtt{ETS}$ \cyr{and $\mathtt{TBATS}$}, $L(t)$, $b(t)$, and $S(t)$ are level, trend, and seasonal components while $\Phi _{n\Delta t} $ is the cumulative damping factor of the trend over the horizon $n \Delta t$. \cyr{For $\mathtt{STL}$, $Tr(t)$ denotes the trend component obtained via local regression smoothing and the expression is a simplified reconstruction}. For Prophet, $g(t)$, $s(t)$, and $h(t)$ are the growth, seasonal, and holiday effects. For $\mathtt{SARIMA}$ (and \cyr{$\mathtt{PAR}$}), $\phi(B)$ and $\Phi(B^T)$ are $\mathtt{AR}$ polynomials, $(1 - B)^d$ and $(1 - B^T)^D$ are differencing operators, $\theta(B)$ and $\Theta(B^T)$ are $\mathtt{MA}$ polynomials, and $\epsilon(t)$ is white noise. \cyr{Complexities $\mathcal{C}_T$ and $\mathcal{C}_F$ denote approximate orders of magnitude for training and forecasting costs, reported for comparative purposes only (see \ref{complexity}).}}
        \begin{tabular}{c c c c}
        \toprule
        \textbf{Method} & \textbf{\cyr{Approximate} Equation for $\hat{X}(t + n \Delta t)$} & \textbf{$\mathcal{C}_T$} & \textbf{$\mathcal{C}_F$} \\
        %\midrule
        %\endhead
        %\midrule \multicolumn{4}{r}{\scriptsize\itshape Continued on next page}\\
        %\endfoot
        \bottomrule
        %\endlastfoot
        \scriptsize Holt-Winters ($\mathtt{HW}$) \cite{https://doi.org/10.2307/2347162} & \scriptsize $L(t) +  n \Delta t. b(t) + S(t - T + n \Delta t)$ & \scriptsize $\mathcal{O}(n)$ & \scriptsize $\mathcal{O}(\cyr{n})$ \\
        \scriptsize Error-Trend-Seasonal ($\mathtt{ETS}$) \cite{hyndman2008forecasting}& \scriptsize $L(t) + \Phi _{n\Delta t}.b(t) + S(t - T + n \Delta t)$ & \scriptsize $\mathcal{O}(n)$ & \scriptsize $\mathcal{O}(\cyr{n})$ \\
        \scriptsize Theta Method \cite{assimakopoulos2000theta}& \scriptsize $\theta X(t) + (1 - \theta)\bigl(L(t) + b(t)n \Delta t\bigr)$ & \scriptsize $\mathcal{O}(nk)$ & \scriptsize $\mathcal{O}(\cyr{n})$ \\
        \scriptsize Periodic $\mathtt{AR}$ ($\mathtt{PAR}$) \cite{VOYANT2018121}& \scriptsize $\sum_{i=1}^{p} \cyr{\phi_i(t \bmod T)}\, X(t + n \Delta t - i)$ & \scriptsize $\mathcal{O}(\cyr{np^2})$ & \scriptsize $\mathcal{O}(p)$ \\
        \scriptsize Prophet \cite{almazrouee2020forecasting}& \scriptsize $g(t + n \Delta t) + s(t + n \Delta t) + h(t + n \Delta t)$ & \scriptsize $\mathcal{O}(\cyr{nk})$ & \scriptsize $\mathcal{O}(n)$ \\
        \scriptsize $\mathtt{SARIMA}$ \cite{box2015time}& \scriptsize $\phi(B)\, \Phi(B^T)\,(1 - B)^d(1 - B^T)^D X(t) = \theta(B)\, \Theta(B^T)\, \epsilon(t)$ & \scriptsize $\mathcal{O}(\cyr{np^2})$ & \scriptsize $\mathcal{O}(p)$ \\
        \scriptsize $\mathtt{STL}$ \cite{cleveland1990stl}& \scriptsize $\cyr{Tr(t+n \Delta t)} + S\bigl(t + (n \Delta t \bmod T)\bigr)$ & \scriptsize \cyr{-} & \scriptsize \cyr{-} \\
        \scriptsize $\mathtt{TBATS}$ \cite{deLivera2011tbats} & \scriptsize $L(t) + \sum_{i=1}^{m} s\bigl(i, t + n \Delta t\bigr)$ & \scriptsize \cyr{-} & \scriptsize \cyr{-} \\
        \scriptsize $\mathtt{N\text{-}BEATS}$ \cite{oreshkin2020nbeats} & \scriptsize $\sum_{i=1}^{N_B} \theta_i\, b_i\bigl(t + n \Delta t\bigr)$ & \scriptsize \cyr{-} & \scriptsize \cyr{-} \\
        \bottomrule
        \end{tabular}
\vspace{2mm}
\begin{minipage}{\linewidth}
\footnotesize
\cyr{For $\mathtt{N\text{-}BEATS}$, $\mathtt{STL}$ and $\mathtt{TBATS}$, computational complexities are not reported, as they depend on model architecture, parameterization, implementation choices and no unique asymptotic expression can be defined. These models are therefore excluded from the comparative complexity analysis.}
\end{minipage}
        \label{tab11}
        \end{table}
        \normalsize
        % \linenumbers % 
This overview establishes quantitative benchmarks for evaluating new forecasting operators under periodic non stationarity. In an era dominated by increasingly complex data-driven architectures, the present work instead pursues analytical parsimony: seeking understanding over computation through an interpretable formulation that requires neither exogenous inputs nor substantial computational resources.

\subsection{Motivation and Scope}
\label{sec:motivation}
According to \cite{DUTTA2017617}, Persistence (the assumption that future values equal current observations) remains the operational baseline of renewable energy forecasting \citep{voyant2017machine}; however, its performance degrades when periodic nonstationary effects modulate the variance and correlation structure, a behavior documented in cyclic environmental processes. Its simplicity, absence of calibration, and negligible computational cost have made it the universal benchmark for deterministic prediction \citep{lauret2015benchmarking,brown1995performance}. However, for renewable processes such as solar irradiance ($I$) or wind speed, whose mean, variance, and correlation vary systematically with daily and seasonal cycles, classical persistence rapidly loses validity beyond intra-hour horizons. Reviews across solar and wind forecasting consistently highlight both the methodological centrality and the structural limitations of persistence based approaches under periodic non stationarity \citep{Hanifi2020,Ghofrani2018}.
 
Several persistence variants have been proposed to mitigate these effects. The Complete-History Persistence Ensemble ($\mathtt{CH\!-\!PeEn}$) aggregates past observations at identical daily phases \citep{YANG2019981}. Smart Persistence incorporates clear-sky modeling but remains sensitive to meteorological inputs and model mismatch \citep{Tato2019,liu2021use}. Clearness index formulations improve robustness and computational efficiency \citep{solar2040026}. Numerous training free forecasting approaches have been proposed, including nearest-neighbor and functional data methods designed to improve persistence baselines. These methods primarily target predictive accuracy under specific operational contexts \citep{10.1063/5.0190493}. The $\mathtt{CLIPER}$ method establishes an $\mathtt{MSE}$ optimal balance between climatology and persistence, providing a stationary analytical baseline \citep{YANG2019981}. Yet all these approaches rely on quasi-stationarity assumptions and therefore cannot preserve the cyclic evolution of second-order statistics that characterizes renewable signals. A complementary transformer-based semantic screening of recent forecasting literature (\texttt{all-MiniLM-L6-v2} applied to \texttt{Scopus}, \texttt{CrossRef}, and \texttt{OpenAlex}) confirms this tendency: among 86 publications (2010–2024) addressing persistence based forecasting, only 37\% analyzed persistence explicitly and 14\% proposed methodological extensions. Publication counts exhibit exponential growth ($R^2 = 0.82$), with time series models representing 49\% of works, numerical weather prediction approaches 30\%, and satellite-based methods 19\%. This indicates that persistence remains primarily treated as a benchmark rather than as a methodological subject.

As described by \cite{Gardner1975CharacterizationOC}, cyclostationarity provides a rigorous mathematical framework for processes whose statistical properties repeat with period $T$ \citep{napolitano2019cyclostationary,TELIDJANE2025113381}. Within this framework, forecasting becomes a symmetry driven estimation problem: optimally combining present information with phase-aligned realizations from previous cycles \citep{https://doi.org/10.1002/env.2700}. The generalized persistence operator $\mathtt{BLEND}$ extends the classical assumption $\hat{X}(t+\tau)=X(t)$ to periodic non stationarity, where both $\mathrm{Var}[X(t)]$ and $\rho(t,\tau)$ vary with $t$. Its analytical form blends the instantaneous observation with its phase aligned counterpart, preserving periodic variance and covariance. The resulting formulation provides an interpretable, parameter free analytical baseline applicable to renewable energy forecasting and to any periodically correlated system, consistent with both statistical theory and empirical evidence from operational studies \citep{Lee2025}.

\subsection{Contributions and Structure of the Paper}
\label{sec:contrib}
This work develops an analytical extension of persistence consistent with cyclostationary statistics and demonstrates its performance on both synthetic and empirical datasets. The proposed $\mathtt{BLEND}$ operator unifies mathematical transparency and operational relevance, providing a training free analytical limit for forecasting renewable processes under periodic non stationarity. Together with its complementary $\mathtt{CLIPER}$ formulation, it defines reproducible and physically interpretable baselines applicable to any process governed by periodic statistical symmetries. The main contributions are:
\begin{itemize}
    \item[$\checkmark$] An analytical extension of persistence formulated under cyclostationary symmetry, preserving periodic mean and covariance structures that conventional stationary models cannot represent. This framework achieves a symmetry induced reduction of effective degrees of freedom, constraining parameter variability to those consistent with periodic invariance.
    \item[$\checkmark$] A unified stochastic formulation derived from mean squared error minimization, linking the analytical operator to an energy conservation principle in second order statistics.
    \item[$\checkmark$] An operationally parsimonious formulation requiring neither exogenous inputs nor training data, emphasizing invariance and interpretability over computational complexity.
    \item[$\checkmark$] Dual validation on synthetic cyclostationary signals and real irradiance from 68 ground stations, confirming that the operator reproduces both short term variability and long term periodic stability.
    \item[$\checkmark$] A symmetry based bridge between naive and data driven paradigms, positioning analytical persistence as the invariant reference for hybrid and learning based forecasting models.
\end{itemize}

The remainder of the paper is organized as follows. Sec.~\ref{sec2} introduces the theoretical formulation of the cyclostationary operator and derives its analytical properties; all the definitions are presented in Table \ref{tab:def}. Sec.~\ref{sec3} presents validation on synthetic and measured datasets. Sec.~\ref{sec4} discusses the physical interpretation, robustness, and multi domain applicability of the framework and outlines perspectives for symmetry based forecasting in renewable energy and complex systems.

\begin{table*}[ht] 
\tiny
\centering
\caption{Summary of the forecasting operators introduced in this study, with their analytical regimes, computational complexities, and governing parameters. Each operator acts as a linear mapping  characterized by training and forecasting complexities $[ \mathcal{C}_T;\mathcal{C}_F]$ as described in ~\ref{complexity}.}
\label{tab:def}
\renewcommand{\arraystretch}{1.25}
\begin{tabular}{c l c l l}
\hline
\textbf{Def.~\#} & \textbf{Operators} & $\boldsymbol{[\mathcal{C}_T;\mathcal{C}_F]}$ & \textbf{Assumptions / Regime} & \textbf{Key Parameters} \\[4pt]
\hline
1 & Simple Persistence $\mathtt{P}$ &
$[\emptyset;\mathcal{O}(1)]$ &
Stationary mean; variance ignored &
-- \\

2 & Cyclic Persistence $\mathtt{P}^{\circlearrowleft}$ &
$[\emptyset;\mathcal{O}(1)]$ &
Known period $T$; phase invariance &
$T,\,n\Delta t$ \\

3 & Smart Persistence $\mathtt{P_S}$ &
$[\mathcal{O}(\mathtt{clr});\mathcal{O}(1)]$ &
Baseline or clear sky model removes trend &
$I_{\mathtt{r}}(t)$ \\

4 & Stationary $\mathtt{CLIPER}$ $\mathtt{P}_{\mathtt{CLIPER}}$ &
$[\mathcal{O}(n);\mathcal{O}(1)]$ &
Weak stationarity; $\mathtt{MSE}$ optimal convex weight &
$\lambda_{\mathtt{CLIPER}}$ \\

5 & Cyclostationary $\mathtt{CLIPER}$ $\mathtt{P}^{\circlearrowleft}_{\mathtt{CLIPER}}$ &
$[\mathcal{O}(nT);\mathcal{O}(T)]$ &
Periodic mean, variance, and covariance symmetry &
$\lambda^{\circlearrowleft}_{\mathtt{CLIPER}}$  \\

6 & Stationary $\mathtt{BLEND}$ $\mathtt{P}_{\mathtt{BLEND}}$ &
$[\mathcal{O}(n);\mathcal{O}(1)]$ &
Weak stationarity; convex comb. of persistence  &
$\lambda_{\mathtt{BLEND}}$ \\

7 & Cyclostationary $\mathtt{BLEND}$ $\mathtt{P}^{\circlearrowleft}_{\mathtt{BLEND}}$ &
$[\mathcal{O}(nT);\mathcal{O}(T)]$ &
Periodic mean, variance, and correlation structure &
$\lambda^{\circlearrowleft}_{\mathtt{BLEND}}$ \\

8 & Simplified $\mathtt{BLEND}$ $\tilde{\mathtt{P}}^{\circlearrowleft}_{\mathtt{BLEND}}$ &
$[ \mathcal{O}(nT);\mathcal{O}(T)]$ &
Quasi-stationary symmetry; removed long-lag corr. &
$\tilde{\lambda}^{\circlearrowleft}_{\mathtt{BLEND}}$ \\

\hline
\end{tabular}
\end{table*}

%%%%%%%%%%%%%%%%%%%%%%%%%%%%%%%%%%%%%%%%%%%%%%%%%%%%%%%%%%%%%%%%%%%%%%%%%%%%%%%%%%%%%%%%%%%%%%%%%%%%%%%
%                                                                                                     %
%                                          Method                                                     %
%                                                                                                     %
%%%%%%%%%%%%%%%%%%%%%%%%%%%%%%%%%%%%%%%%%%%%%%%%%%%%%%%%%%%%%%%%%%%%%%%%%%%%%%%%%%%%%%%%%%%%%%%%%%%%%%%
\section{Applied Methodology}
\label{sec2}
To establish a consistent basis for evaluating the proposed periodic extensions, a representative set of forecasting operators is considered. These models span a hierarchy of complexity and statistical assumptions, ranging from naive persistence to parametric and hybrid formulations. Each operator is characterized by its training ($\mathcal{C}_T$) and forecasting ($\mathcal{C}_F$) complexities, denoted as $[\mathcal{C}_T; \mathcal{C}_F]$ (App.~\ref{complexity}). The forecast horizon is expressed as $n\Delta t$, where $\Delta t$ is the discrete time step and $n$ the lag index; thus, $n\Delta t$ defines the total lead time of the prediction. This comparative framework enables systematic assessment of how periodic invariance and symmetry-based constraints influence forecasting performance and computational efficiency.

\subsection{Classical Models}
\label{sec:classical}
Classical forecasting models provide analytical baselines relying on simple statistical assumptions and limited temporal memory. They serve as reference points for evaluating the contribution of periodic or symmetry based extensions to predictive performance.

\begin{definition}[Simple Persistence Operator \cite{DUTTA2017617} $[\emptyset ; \mathcal{O}(1)\rb$]
The simple persistence operator $\mathtt{P}$ assumes that the most recent observation remains invariant over the forecast horizon. It is defined as
\begin{equation*}
\mathtt{P}(I)(t) = I(t),
\end{equation*}
so that the forecast for any future time $t + n\Delta t$ is
\begin{equation*}
\hat{I}(t + n\Delta t) = \mathtt{P}(I)(t) = I(t).
\end{equation*}
\end{definition}
\begin{remark}
This operator exhibits minimal computational cost, providing a physically interpretable limit corresponding to perfect temporal invariance. However, it neglects any trend or periodic modulation of first- and second-order moments, breaking the temporal symmetry observed in most renewable-energy processes and motivating the periodic extensions introduced in Sec.~\ref{sec:blend}.
\end{remark}

\begin{definition}[Cyclic Persistence Operator \cite{napolitano2019cyclostationary} $[\emptyset ; \mathcal{O}(1)\rb$]
The cyclic persistence operator $\mathtt{P}^{\circlearrowleft}$ predicts the future value $\hat{I}(t + n\Delta t)$ by referencing the signal value at the same phase in the previous cycle, determined by the known period $T$. It is defined as
\begin{equation*}
\mathtt{P}^{\circlearrowleft}(I)(t) = I(t - T + n\Delta t),
\end{equation*}
so that
\begin{equation*}
\hat{I}(t + n\Delta t) = \mathtt{P}^{\circlearrowleft}(I)(t) = I(t - T + n\Delta t).
\end{equation*}
\end{definition}
\begin{remark}
The operator assumes a signal with a well-defined periodic structure of period $T$. It requires no training phase ($\mathcal{C}_T = \emptyset$) and incurs constant forecasting complexity ($\mathcal{C}_F = \mathcal{O}(1)$), since predictions rely on direct phase-aligned retrieval from historical data.
\end{remark}
\begin{remark}
Cyclic persistence is effective for time series dominated by stable periodic components such as diurnal, weekly, or seasonal cycles. However, it remains sensitive to phase shifts, irregular periodicity, and noise, which can introduce significant prediction errors. Despite these limitations, it constitutes a physically interpretable and computationally minimal baseline, restoring partial temporal symmetry and motivating the generalized formulation introduced in Sec.~\ref{sec:blend}.
\end{remark}

\begin{definition}[Smart Persistence Operator \cite{lauret2015benchmarking} $[\mathcal{O}(\mathtt{ref}); \mathcal{O}(1)\rb$]
The smart persistence operator $\mathtt{P_S}$ generalizes simple persistence by incorporating a deterministic reference signal $I_\mathtt{ref}(t)$ that represents the large-scale or slowly varying component of the process (e.g., clear-sky irradiance, diurnal mean, or low-frequency trend). It is defined as
\begin{equation*}
\mathtt{P_S}(I)(t) = \mathtt{k_r}(t)\, I_\mathtt{ref}(t + n\Delta t),
\end{equation*}
where the normalized residual or reference index $\mathtt{k_r}(t)$ is given by
\begin{equation*}
\mathtt{k_r}(t) = \frac{I(t)}{I_\mathtt{ref}(t)}.
\end{equation*}
The resulting forecast is therefore
\begin{equation*}
\hat{I}(t + n\Delta t) = \mathtt{P_S}(I)(t) = \frac{I(t)}{I_\mathtt{ref}(t)}\, I_\mathtt{ref}(t + n\Delta t).
\end{equation*}
\end{definition}
\begin{remark}
The operator assumes that $I_\mathtt{ref}(t)$ is positive and captures the deterministic structure of the process, such as a periodic mean, a Fourier-based envelope, or a clear-sky curve in radiative contexts. The training complexity depends on the estimation of this reference component, $\mathcal{C}_T = \mathcal{O}(\mathtt{ref})$, while the forecasting step remains constant, $\mathcal{C}_F = \mathcal{O}(1)$.
\end{remark}
\begin{remark}
By normalizing the signal through $I_\mathtt{ref}(t)$, the operator removes large-scale deterministic modulation, isolating short-term stochastic variability.  
This normalization enhances robustness and interpretability in processes exhibiting mixed deterministic–stochastic dynamics.  
Despite its dependence on the reference accuracy, $\mathtt{P_S}$ remains a lightweight analytical baseline that bridges empirical persistence and trend-compensated forecasting, partially restoring periodic symmetry while reducing dimensional parametrization.
\end{remark}

\begin{definition}[$\mathtt{CLIPER}$ Operator under Stationary Assumptions \cite{YANG2019981} $[\mathcal{O}(n); \mathcal{O}(1)\rb$]
The $\mathtt{CLIPER}$ operator $\mathtt{P}_{\mathtt{CLIPER}}$ predicts a future value as a convex combination of the process mean $\mu$ and the most recent observation $I(t)$. Analytical formulations of this operator in stationary contexts are discussed in \citep{VOYANT2022747}.  
It is defined as
\begin{equation*}
\mathtt{P}_{\mathtt{CLIPER}}(I)(t) = \lambda^{\mathtt{CLIPER}}\, \mu + \bigl(1 - \lambda^{\mathtt{CLIPER}}\bigr) I(t),
\end{equation*}
where the weighting parameter $\lambda^{\mathtt{CLIPER}}$ is derived by minimizing the mean squared error ($\mathtt{MSE}$),
\begin{equation*}
\lambda^{\mathtt{CLIPER}} = 1 - \rho(n\Delta t),
\end{equation*}
with $\rho(n\Delta t)$ the autocorrelation at time lag $n\Delta t$ defined as
\begin{align*}
\rho(n\Delta t) = \frac{C(n\Delta t)}{\sigma^2}, \, \text{and} \, C(n\Delta t) = \mathbb{E}\big[(I(t) - \mu)(I(t + n\Delta t) - \mu)\big],
\end{align*}
and $\sigma^2$ the variance of the process.  
The parameter $\lambda^{\mathtt{CLIPER}}$ thus balances global stability through $\mu$ and short-term persistence through $I(t)$, with larger correlations $\rho(n\Delta t)$ leading to stronger reliance on the latest observation.
\end{definition}
\begin{remark}
This formulation assumes weak stationarity, \textit{i.e.}, invariance of mean, variance, and autocovariance over time.  
Its closed-form parameterization ensures minimal complexity $[\mathcal{O}(n); \mathcal{O}(1)]$, making $\mathtt{CLIPER}$ a compact analytical baseline for any stationary process, whether physical, environmental, or socioeconomic.
\end{remark}
\begin{remark}
For $0 \leq \rho(n\Delta t) \leq 1$, $\lambda^{\mathtt{CLIPER}}$ lies within $[0,1]$, yielding a valid convex combination between mean and current value.  
When $\rho(n\Delta t) < 0$, the parameter may exceed unity, reflecting anti-correlated regimes where the mean contribution dominates.  
The complete derivation from $\mathtt{MSE}$ minimization is provided in \ref{CLIPER_cyclo_statio}. Cyclostationary models provide a natural framework for processes exhibiting periodic statistical structure, allowing predictors to remain consistent with cyclic variations in mean and covariance.
\end{remark}

\subsection{Cyclostationary Extension of Forecasting Operators}
\label{sec:blend}
Building upon stationary formulations, the present framework extends forecasting operators to processes whose statistical properties vary periodically in time. The following formulation does not introduce new statistical concepts; rather, it extends the classical persistence operator to a cyclostationary framework, providing a unified analytical basis for the developments that follow. This generalization provides the analytical basis for predictors consistent with cyclic dependencies in mean, variance, and covariance. Time indices are further interpreted on the cyclic quotient group $\mathbb{Z}/T\mathbb{Z}$ as described by \cite{Gardner1975CharacterizationOC}, ensuring that all references to $t$ and $t-T$ are treated as equivalent under the intrinsic period $T$. This formalizes the periodic symmetry assumption and guarantees that forecasting operators act on a time domain with built-in invariance to $T$-shifts. 
Table~\ref{tab:stats_summary} summarizes the principal distinctions between stationary and periodically stationary processes in terms of mean, variance, covariance, and correlation. While stationary processes exhibit time-invariant moments, cyclostationary processes maintain these properties only up to a phase shift of period~$T$, where $t \bmod T$ indicates the phase position of $t$ within the cycle.
\renewcommand{\arraystretch}{1.0}
\setlength{\tabcolsep}{5pt}
\begingroup\small
\begin{table}[H]
\centering
\begin{tabular}{c c c c}
\toprule
\textbf{Statistical} & \textbf{Stationary Process} & \textbf{Cyclostationary Process} \\
\midrule
$E[X(t)]$ & $\mu(t)=\mu,\ \forall t$ & $\mu(t)=\mu(t+T),\ \forall t$ \\
$\mathrm{Var}(X(t))$ & $\sigma^2(t)=\sigma^2,\ \forall t$ & $\sigma^2(t)=\sigma^2(t+T)$ \\
$\mathrm{Cov}(X(t),X(t+n\Delta t))$ & $C(t,n\Delta t)=C(n\Delta t)$ & $C(t,n\Delta t)=C(t+T,n\Delta t)$ \\
$\rho(X(t),X(t+n\Delta t))$ & $\rho(t,n\Delta t)=\dfrac{C(n\Delta t)}{\sigma^2}$ & $\rho(t,n\Delta t)=\dfrac{C(t,n\Delta t)}{\sigma(t)\sigma(t+n\Delta t)}$ \\
\bottomrule
\end{tabular}
\caption{Summary of Statistical Characteristics for Stationary vs.\ Periodic (Cyclostationary) Processes.}
\label{tab:stats_summary}
\end{table}
\endgroup

%\FloatBarrier
%\clearpage
\begin{theorem}[Periodically Stationary or Cyclostationary Processes \cite{Gardner1975CharacterizationOC,napolitano2019cyclostationary}]
Let $I(t)$ be a random process exhibiting statistical periodicity of period $T$. The process is periodically stationary if its first- and second-order moments satisfy
\begin{align*}
E[I(t)] &= E[I(t+T)], \\
C(I(t),I(t+n\Delta t)) &= C(I(t+T),I(t+T+n\Delta t)),
\end{align*}
where $C(I(t),I(t+n\Delta t))$ denotes the covariance between $I(t)$ and $I(t+n\Delta t)$.  
The variance and covariance are periodic functions of the phase $t \bmod T$,  
\begin{align*}
\sigma^2(t) &= E\!\left[(I(t) - E[I(t)])^2\right], \\
C(t,n\Delta t) &= E\!\left[(I(t) - E[I(t)])(I(t+n\Delta t) - E[I(t+n\Delta t)])\right],
\end{align*}
and the correlation coefficient is defined by
\begin{equation*}
\rho(t,n\Delta t) = \frac{C(t,n\Delta t)}{\sigma(t)\sigma(t+n\Delta t)}.
\end{equation*}
\end{theorem}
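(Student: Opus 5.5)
The statement is mostly a definition, so the only genuine content is the claim that, under the two stated invariances, the variance $\sigma^2(t)$, the covariance $C(t,n\Delta t)$, and the correlation $\rho(t,n\Delta t)$ are periodic in $t$ with period $T$. They are therefore well-defined functions of the phase $t \bmod T$, i.e.\ of the class of $t$ in $\mathbb{Z}/T\mathbb{Z}$ as introduced above. My plan is to verify these periodicities in turn, working directly from the definitions and assuming finite second moments throughout.

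\textbf{Step 1: mean and centred process.} Write $\mu(t)=E[I(t)]$; the first hypothesis says $\mu(t+T)=\mu(t)$. I will introduce the centred process $\tilde I(t)=I(t)-\mu(t)$. Then $C(t,n\Delta t)=E[\tilde I(t)\tilde I(t+n\Delta t)]$, and this coincides with the quantity $C(I(t),I(t+n\Delta t))$ in the hypothesis.

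\textbf{Step 2: covariance and variance.} The second hypothesis reads literally $C(t,n\Delta t)=C(t+T,n\Delta t)$ for every lag $n\ge 0$. Specialising to $n=0$ gives $\sigma^2(t)=C(t,0)=C(t+T,0)=\sigma^2(t+T)$. Iterating the shift by induction, with backward shifts obtained by substituting $t-T$ for $t$, gives invariance under $t\mapsto t+kT$ for all $k\in\mathbb{Z}$. Hence both $\sigma^2$ and $C(\cdot,n\Delta t)$ factor through $t\bmod T$, which is the content of the ``periodic functions of the phase'' assertion.

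\textbf{Step 3: correlation.} For $\rho$ I use the periodicity of $\sigma$ at both $t$ and $t+n\Delta t$: $\sigma(t+T)\sigma(t+T+n\Delta t)=\sigma(t)\sigma(t+n\Delta t)$. Combined with Step 2, this yields $\rho(t+T,n\Delta t)=\rho(t,n\Delta t)$, and Cauchy--Schwarz gives $|\rho|\le 1$. This matches the last column of Table~\ref{tab:stats_summary}.

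\textbf{Main obstacle.} The only delicate point is well-definedness of $\rho$ when $\sigma(t)=0$ at some phase, for instance nighttime irradiance. There I would either restrict to phases with $\sigma(t)\sigma(t+n\Delta t)>0$ or adopt the convention $\rho:=0$ at degenerate phases. Either choice is compatible with periodicity, because the set of degenerate phases is itself $T$-invariant by Step 2.
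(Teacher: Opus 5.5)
Your argument is correct, and it goes further than the paper does. The paper gives no proof of this statement. It imports it from Gardner and Napolitano as a definition, records the resulting periodicities of $\mu$, $\sigma^2$, $C$ and $\rho$ in Table~\ref{tab:stats_summary}, and adopts the convention that time is read modulo $T$. Treating it as a definition is legitimate, since the periodicity of $\sigma^2$ and $C$ is essentially built into the wide-sense cyclostationary definition.

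Your three steps supply the verification the paper leaves implicit. They also make explicit two points the paper only gestures at:
\begin{itemize}
\item Periodicity of $\rho(\cdot,n\Delta t)$ requires periodicity of $\sigma$ at both $t$ and $t+n\Delta t$.
\item The degenerate phases with $\sigma=0$ (the ``all-zero phase bins'' mentioned in the discussion of Figure~\ref{cyclic_parameters_plot}) form a $T$-invariant set, so any convention for $\rho$ there is compatible with periodicity.
\end{itemize}

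One point needs tightening. The hypothesis does not ``literally'' range over $n\ge 0$: $n$ is unquantified in the statement, and elsewhere the paper takes $n\ge 1$ for forecast horizons. The lag-zero case is not a harmless extra. If the covariance condition held only for $n\ge 1$, take a sequence of independent variables with constant mean and non-periodic variance. All nonzero-lag covariances vanish, so both hypotheses hold, yet $\sigma^2$ is not periodic. You should therefore state explicitly that you read the second-order condition as holding for all lags including $0$. This is the standard wide-sense definition in Gardner's framework, and it is what ``second-order moments'' must mean for the variance claim to be true.
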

\begin{remark}
In contrast to weakly stationary processes, cyclostationary signals preserve their statistical structure modulo the period $T$.  
Their moments evolve periodically, reflecting temporal symmetries common to many natural and engineered systems such as irradiance, wind speed, electrical load, or tidal dynamics.  
This property defines the statistical foundation required for analytical forecasting under periodic non stationarity.
\end{remark}

\begin{definition}[$\mathtt{CLIPER}$ Operator under Cyclostationary Assumptions $[\mathcal{O}(n); \mathcal{O}(1)\rb$]
The cyclostationary $\mathtt{CLIPER}$ operator $\mathtt{P}^{\circlearrowleft}_{\mathtt{CLIPER}}$ predicts a future value as a convex combination of the periodic mean $\mu(t)$ and the current observation $I(t)$,
\begin{equation*}
\mathtt{P}^{\circlearrowleft}_{\mathtt{CLIPER}}(I)(t) = \lambda^{\circlearrowleft}_{\mathtt{CLIPER}}\, \mu(t) + \bigl(1 - \lambda^{\circlearrowleft}_{\mathtt{CLIPER}}\bigr) I(t),
\end{equation*}
where the weighting parameter $\lambda^{\circlearrowleft}_{\mathtt{CLIPER}}$ minimizes the mean squared error ($\mathtt{MSE}$) and is given by

\begin{equation*}
\lambda^{\circlearrowleft}_{\mathtt{CLIPER}} =
\frac{[\mu(t+n\Delta t) - \mu(t)]^2 + \sigma^2(t) - \rho(t,n\Delta t)\,\sigma(t)\sigma(t+n\Delta t)}
{[\mu(t+n\Delta t) - \mu(t)]^2 + \sigma^2(t)}.
\end{equation*}

Here $\rho(t,n\Delta t)=C(t,n\Delta t)/[\sigma(t)\sigma(t+n\Delta t)]$ is the phase dependent correlation coefficient, with
$C(t,n\Delta t)=E[(I(t)-\mu(t))(I(t+n\Delta t)-\mu(t+n\Delta t))]$ the cyclic covariance, and $\sigma(t)$, $\sigma(t+n\Delta t)$ the local standard deviations.  
The corresponding forecast is
\begin{equation*}
\hat{I}(t+n\Delta t)=\mathtt{P}^{\circlearrowleft}_{\mathtt{CLIPER}}(I)(t).
\end{equation*}
\end{definition}
\begin{remark}
This formulation incorporates the periodic evolution of first- and second-order moments through $\mu(t)$, $\sigma(t)$, and $\rho(t,n\Delta t)$.  
The numerator of $\lambda^{\circlearrowleft}_{\mathtt{CLIPER}}$ adjusts for phase dependent mean shifts and correlation asymmetry, while the denominator ensures proper normalization and convexity.
\end{remark}
\begin{remark}
The denominator in $\lambda^{\circlearrowleft}_{\mathtt{CLIPER}}$ is non negative and strictly positive for any non constant process, guaranteeing convexity of the $\mathtt{MSE}$ minimization (see Appendix~\ref{CLIPER_cyclo_statio}).
\end{remark}
\begin{remark}
The coefficient $\lambda^{\circlearrowleft}_{\mathtt{CLIPER}}$ is not necessarily bounded in $[0,1]$.  
Large phase differences or strong heteroscedasticity can yield $\lambda^{\circlearrowleft}_{\mathtt{CLIPER}}>1$ or $\lambda^{\circlearrowleft}_{\mathtt{CLIPER}}<0$.  
In practice, regularization or truncation can be applied without affecting the analytical consistency of the model (Appendix~\ref{CLIPER_cyclo_statio}).
\end{remark}

\begin{definition}[$\mathtt{BLEND}$ Persistence Operator under Stationary Assumptions $[\mathcal{O}(n); \mathcal{O}(1)\rb$]
The $\mathtt{BLEND}$ operator $\mathtt{P}_{\mathtt{BLEND}}$ combines the cyclic persistence $\mathtt{P}^{\circlearrowleft}$ and simple persistence $\mathtt{P}$ through a convex weighting:
\begin{equation*}
\mathtt{P}_{\mathtt{BLEND}}(I)(t) = (1 - \lambda_{\mathtt{BLEND}})\,\mathtt{P}^{\circlearrowleft}(I)(t) + \lambda_{\mathtt{BLEND}}\,\mathtt{P}(I)(t),
\end{equation*}
so that, for a stationary process,
\begin{equation*}
\hat{I}(t + n\Delta t) = (1 - \lambda_{\mathtt{BLEND}})\,I(t - T + n\Delta t) + \lambda_{\mathtt{BLEND}}\,I(t).
\end{equation*}
The $\mathtt{MSE}$-optimal weighting parameter is given by
\begin{equation*}
\lambda_{\mathtt{BLEND}} = 
\frac{\rho(T) - \rho(n\Delta t) + \rho(n\Delta t - T) - 1}
     {2(\rho(n\Delta t - T) - 1)},
\end{equation*}
where $\rho(h)$ denotes the autocorrelation of $I(t)$ at lag $h$.
\end{definition}
\begin{remark}
The derivation assumes weak stationarity with constant mean and variance, equally spaced samples, and ergodic estimation of autocorrelation. Under these conditions, $\lambda_{\mathtt{BLEND}}$ dynamically balances short term ($\rho(n\Delta t)$) and periodic ($\rho(T)$, $\rho(T - n\Delta t)$) dependencies, enabling a smooth transition between instantaneous and cyclic persistence.
\end{remark}
\begin{remark}
Convexity of the $\mathtt{MSE}$ criterion ensures a unique analytical solution for $\lambda_{\mathtt{BLEND}}$ under weak stationarity. Although each correlation term $\rho(h)$ is bounded in $[-1,1]$, their combination in the rational form of $\lambda_{\mathtt{BLEND}}$ may yield values slightly outside $[0,1]$, particularly when $\rho(T-n\Delta t)\!\approx\!1$ and the regressors $I(t)$ and $I(t-T+n\Delta t)$ become nearly collinear. In such cases, mild convex clipping can be applied to enforce $\lambda\!\in\![0,1]$ without altering the analytical structure or optimality of the solution.
\end{remark}
\begin{remark}
The $\mathtt{BLEND}$ operator thus provides a closed form, training free predictor that preserves the periodic covariance structure while minimizing parameter dimensionality. It represents the analytical limit between short term memory and phase aligned recurrence, serving as a physical bridge between persistence and cyclostationary symmetry.
\end{remark}

\begin{definition}[$\mathtt{BLEND}$ Persistence Operator under Periodically Stationary Assumptions $[\mathcal{O}(nT); \mathcal{O}(T)\rb$]
The periodically stationary $\mathtt{BLEND}$ operator $\mathtt{P}^{\circlearrowleft}_{\mathtt{BLEND}}$ extends the stationary formulation by allowing the weighting to vary with the phase of the process. It combines cyclic persistence $\mathtt{P}^{\circlearrowleft}$ and simple persistence $\mathtt{P}$ through a convex superposition:
\begin{equation}
\mathtt{P}^{\circlearrowleft}_{\mathtt{BLEND}}(I)(t)
= (1 - \lambda^{\circlearrowleft}_{\mathtt{BLEND}})\,\mathtt{P}^{\circlearrowleft}(I)(t)
+ \lambda^{\circlearrowleft}_{\mathtt{BLEND}}\,\mathtt{P}(I)(t),
\end{equation}
where the phase dependent weighting $\lambda^{\circlearrowleft}_{\mathtt{BLEND}}$ explicitly reflects the periodic modulation of the mean, variance, and covariance. Under the periodically stationary assumption, it writes
\begin{align}
\lambda^{\circlearrowleft}_{\mathtt{BLEND}}
&=
\frac{
(\mu(t)-\mu(t+n\Delta t))^2
+ \sigma^2(t+n\Delta t)\,[1-\rho(t+n\Delta t,T)]
}{%
(\mu(t)-\mu(t+n\Delta t))^2
+ \sigma^2(t)+\sigma^2(t+n\Delta t)
}
\nonumber\\[0.2cm]
&\quad+
\frac{
\sigma(t)\sigma(t+n\Delta t)\,[\rho(t,n\Delta t)-\rho(t,n\Delta t-T)]
}{
(\mu(t)-\mu(t+n\Delta t))^2
+ \sigma^2(t)+\sigma^2(t+n\Delta t)
+ 2\rho(t,n\Delta t-T)\sigma(t)\sigma(t+n\Delta t)
}.
\label{lambda_cyclo_blend}
\end{align}
Here $\rho(t,n\Delta t)$ is the periodically stationary correlation at lag $n\Delta t$, 
$\rho(t,n\Delta t)=C(t,n\Delta t)/[\sigma(t)\sigma(t+n\Delta t)]$, with $C(t,n\Delta t)$ the cyclic covariance between $I(t)$ and $I(t+n\Delta t)$.  
The terms $\mu(t)$, $\mu(t+n\Delta t)$ denote the time varying means, and $\sigma(t)$, $\sigma(t+n\Delta t)$ the corresponding standard deviations.
\end{definition}
\begin{remark}
This formulation embeds periodic modulation directly into the estimator, preserving the covariance symmetry of cyclostationary processes. 
It is particularly effective for systems where information recurs at fixed phases—such as atmospheric, tidal, or demand cycles—while remaining fully analytical and training free.
\end{remark}
\begin{remark}
Numerical sensitivity may occur when periodic variations of $\mu(t)$ or $\sigma(t)$ are weak, since the denominator in Eq.~\eqref{lambda_cyclo_blend} can approach zero and amplify correlation noise. 
Such situations correspond to quasi stationary regimes where the process shows minimal contrast between cycles.  
In practice, a mild regularization (\textit{e.g.}, $\lambda^{\circlearrowleft}_{\mathtt{BLEND}}\!\leftarrow\!\min(1,\max(0,\lambda^{\circlearrowleft}_{\mathtt{BLEND}}))$) or a simplified phase averaged variant preserves stability without altering the analytical structure.  
A complete derivation and stability analysis are provided in Appendix~\ref{BLEND_cyclo_statio}.
\end{remark}

\begin{definition}[Simplified $\mathtt{BLEND}$ Persistence Operator $[\mathcal{O}(nT); \mathcal{O}(T)\rb$]
The simplified $\mathtt{BLEND}$ operator $\tilde{\mathtt{P}}^{\circlearrowleft}_{\mathtt{BLEND}}$ provides the analytical limit of the periodically stationary formulation under mild regularity assumptions. It combines the cyclic and simple persistence operators as
\begin{equation*}
\tilde{\mathtt{P}}^{\circlearrowleft}_{\mathtt{BLEND}}(I)(t)
= (1 - \tilde{\lambda}^{\circlearrowleft}_{\mathtt{BLEND}})\,\mathtt{P}^{\circlearrowleft}(I)(t)
+ \tilde{\lambda}^{\circlearrowleft}_{\mathtt{BLEND}}\,\mathtt{P}(I)(t),
\end{equation*}
where the phase dependent weighting simplifies to
\begin{equation*}
\tilde{\lambda}^{\circlearrowleft}_{\mathtt{BLEND}} = \tfrac{1}{2}\bigl(1 + \rho(t,n\Delta t)\bigr).
\end{equation*}
This formulation can be interpreted as an optimal shrinkage between instantaneous memory and phase aligned recurrence, governed by the local correlation structure.
\end{definition}
\begin{remark}[Underlying assumptions]
The reduction from Eq.~\eqref{lambda_cyclo_blend} to the closed form above holds under the following conditions:
\begin{enumerate}
    \item \textit{Weak stationarity of first and second moments:}
    $\mu(t)=\mu(t+n\Delta t)=\mu$ and $\sigma(t)=\sigma(t+n\Delta t)=\sigma$.
    \item \textit{Correlation symmetry:}
    $\rho(t+n\Delta t,T)=\rho(t,n\Delta t-T)$.
    \item \textit{Negligible long-lag correlation:}
    for large $T$ with $1\!\ll\!n\Delta t\!\ll\!T$, $\rho(t,n\Delta t-T)\!\approx\!0$.
\end{enumerate}
These assumptions define a quasi stationary symmetry regime where the process locally conserves variance but remains phase modulated.
\end{remark}
\begin{remark}
The resulting coefficient $\tilde{\lambda}^{\circlearrowleft}_{\mathtt{BLEND}}$ linearly interpolates between the current state $I(t)$ and its phase aligned recurrence $I(t-T+n\Delta t)$ according to $\rho(t,n\Delta t)$.  
High correlation ($ \rho \, \to\!1$) favors short term persistence, whereas low or negative correlation increases reliance on the periodic component.  
This analytical balance preserves the covariance structure with minimal parametrization, yielding a fully interpretable, training free predictor suited for real time forecasting.
\end{remark}
The forecasting workflow using $\mathtt{P}_{\mathtt{BLEND}}$ is summarized in Figure~\ref{fig:both_diagrams}, which distinguishes the stationary, periodically stationary, and simplified analytical forms.  
The mathematical properties detailed in \ref{annexe:proprietes} ensure convexity, boundedness, and stability, core guarantees for applying $\mathtt{BLEND}$ operators to cyclostationary energy processes where periodicity and noise coexist.

\begin{figure*}[ht]
    \centering
    \begin{subfigure}[b]{0.48\textwidth}
        \centering
        \scalebox{0.6}{
\begin{tikzpicture}[node distance=1.8cm]
    \node (start) [process, fill=gray1] {\parbox{6cm}{\textbf{Stationary Time Series} \\ \tiny \centering Period $T$, Time Step $\Delta t$, horizon $ n\Delta t$}};
    \node (pcalc) [process, below left of=start, xshift=-2cm, yshift=-2.5cm, fill=gray2] {\parbox{4cm}{\centering\textbf{Persistences} \\ \tiny  \parbox{4cm} {\centering$P(X)(t)=X(t)$  $P^{\circlearrowleft}(X)(t)=X(t - T + n\Delta t)$}}};
    \node (mean) [process, below right of=start, xshift=2cm, yshift=-0.8cm, fill=gray3] {\parbox{4cm}{\centering\textbf{Compute Mean} \\ \parbox{4cm}{\centering \tiny \centering $\mu_k(t)$}}};
    \node (corr) [process, below of=mean, yshift=0.1cm, fill=gray3] {\parbox{4cm}{\centering\textbf{Compute Correlations} \\ \tiny \centering $\rho(n\Delta t), \rho(h - T), \rho(T)$}};
    \node (std) [process, below of=corr, yshift=-0cm, fill=gray3] {\parbox{4cm}{\centering \textbf{Compute Standard Deviation} \\ \tiny \centering $\sigma_k(t), \sigma_k(t + n\Delta t)$}};
    \node (comb) [draw, dashed, thick, rounded corners, fit=(mean) (std) (corr)] {};
    \node (lambda) [process, below of=std, yshift=-0.3cm, fill=gray6, text=white] {\parbox{6cm}{\centering\textbf{Compute $\lambda$} \\ \tiny \centering $\ \frac{\rho(T) - \rho(n\Delta t) + \rho(n\Delta t - T) - 1}{2(\rho(n\Delta t - T) - 1)}$}};
    \node (pcomb) [process, below of=lambda, yshift=-0.1cm, fill=gray6, text=white] {\parbox{5.5cm}{\centering\textbf{Prediction with $P_{\text{BLEND}}$} \\ \tiny \centering $\hat{X}(t + n\Delta t) = (1 - \lambda) P^{\circlearrowleft}(X)(t) + \lambda P(X)(t)$}};
    \draw [arrow] (start) -- (pcalc.north);
    \draw [arrow] (start) -- (comb.north);
    \draw [arrow] (pcalc) |- (pcomb);
    \draw [arrow] (comb.south) -- (lambda.north);
    \draw [arrow] (lambda) -- (pcomb);
\end{tikzpicture}}
        \caption{Stationary Case ($P_{\text{BLEND}}$)}
        \label{fig:first}
    \end{subfigure}
    \begin{subfigure}[b]{0.48\textwidth}
        \centering
        \scalebox{0.6}{
\begin{tikzpicture}[node distance=1.8cm]
    \node (start) [process, fill=gray1] {\parbox{6cm}{\textbf{Cyclostationary Time Series} \\ \tiny \centering Period $T$, Time Step $\Delta t$, horizon $n\Delta t$}};
    \node (pcalc) [process, below left of=start, xshift=-2cm, yshift=-0.5cm, fill=gray2] {\parbox{4cm}{\centering\textbf{Persistences} \\ \tiny  \parbox{4cm} {\centering$P(X)(t)=X(t)$  $P^{\circlearrowleft}(X)(t)=X(t - T + n\Delta t)$}}};
    \node (divide) [process, below right of=start, xshift=2cm, yshift=-0.5cm, fill=gray3] {\parbox{4cm}{\centering\textbf{Divide the Series} \\ \tiny \centering into $k = T / \Delta t$ Sub-Series}};
    \node (mean) [process, below left of=divide, xshift=-1.5cm, yshift=-0.8cm, fill=gray4] {\parbox{4cm}{\centering\textbf{Compute Mean} \\ \parbox{4cm}{\centering \tiny \centering $\mu_k(t)$}}};
    \node (std) [process, below of=divide, yshift=-2cm, fill=gray4] {\parbox{4cm}{\centering \textbf{Compute Standard Deviation} \\ \tiny \centering $\sigma_k(t), \sigma_k(t + n\Delta t)$}};
    \node (corr) [process, below right of=divide, xshift=1.5cm, yshift=-0.8cm, fill=gray4] {\parbox{4cm}{\centering\textbf{Compute Correlations} \\ \tiny \centering $\rho_k(t, n\Delta t), \rho_k(t, n\Delta t - T), \rho_k(t + n\Delta t, T)$}};
    \node (comb) [draw, dashed, thick, rounded corners, fit=(mean) (std) (corr)] {};
    \node (lambda) [process, below of=std, yshift=-0.3cm, fill=gray6, text=white] {\parbox{11cm}{\centering\textbf{Compute $\lambda_k$} \\ \tiny \centering $\frac{(\mu_k(t) - \mu_k(t+n\Delta t))^2 + \sigma_k^2(t+n\Delta t) (1 - \rho_k(t+n\Delta t, T)) + \sigma_k(t) \sigma_k(t+n\Delta t) (\rho_k(t, n\Delta t) - \rho_k(t, n\Delta t-T))}{(\mu_k(t) - \mu_k(t+n\Delta t))^2 + \sigma_k^2(t) + \sigma_k^2(t+n\Delta t) - 2 \rho_k(t, n\Delta t-T) \sigma_k(t) \sigma_k(t+n\Delta t)}$}};
    \node (pcomb) [process, below of=lambda, yshift=-0.1cm, fill=gray6, text=white] {\parbox{5.5cm}{\centering\textbf{Prediction with $P^{\circlearrowleft}_{\text{BLEND}}$} \\ \tiny \centering $\hat{X}_k(t + n\Delta t) = (1 - \lambda_k) P^{\circlearrowleft}(X)(t) + \lambda_k P(X)(t)$}};
    \draw [arrow] (start) -- (pcalc.north);
    \draw [arrow] (start) -- (divide.north);
    \draw [arrow] (pcalc) |- (pcomb);
    \draw [arrow] (divide) -- (comb);
    \draw [arrow] (comb.south) -- (lambda.north);
    \draw [arrow] (lambda) -- (pcomb);
\end{tikzpicture}}
   \caption{Cyclostationary Case ($P^{\circlearrowleft}_{\text{BLEND}}$)}
    \end{subfigure}
    \begin{subfigure}[b]{\textwidth}
        \centering
        \scalebox{0.6}{
\begin{tikzpicture}[node distance=1.8cm]
    \node (start) [process, fill=gray1] {\parbox{6cm}{\textbf{Cyclostationary Time Series} \\ \tiny \centering Period $T$, Time Step $\Delta t$, horizon $n\Delta t$}};
    \node (pcalc) [process, below left of=start, xshift=-2cm, yshift=-3.5cm, fill=gray2] {\parbox{4cm}{\centering\textbf{Persistences} \\ \tiny  \parbox{4cm} {\centering$P(X)(t)=X(t)$  $P^{\circlearrowleft}(X)(t)=X(t - T + n\Delta t)$}}};
    \node (divide) [process, below right of=start, xshift=2cm, yshift=-0.5cm, fill=gray3] {\parbox{4cm}{\centering\textbf{Divide the Series} \\ \tiny \centering into $k = T / \Delta t$ Sub-Series}};
    \node (corr) [process, below right of=divide, xshift=-1.3cm, yshift=-1.7cm, fill=gray4] {\parbox{4cm}{\centering\textbf{Compute Correlations} \\ \tiny \centering $\rho_k(t, n\Delta t)$}};
    \node (comb) [draw, dashed, thick, rounded corners, fit=(corr)] {};
    \node (lambda) [process, below of=std, yshift=-0.3cm, fill=gray6, text=white] {\parbox{6cm}{\centering\textbf{Compute $\lambda_k$} \\ \tiny \centering $ \frac{1}{2}(1 + \rho_k(t, n\Delta t)).$}};
    \node (pcomb) [process, below of=lambda, yshift=-0.1cm, fill=gray6, text=white] {\parbox{5.5cm}{\centering\textbf{Prediction with $\tilde{P}^{\circlearrowleft}_{\text{BLEND}}$} \\ \tiny \centering $\hat{X}_k(t + n\Delta t) = (1 - \lambda_k) P^{\circlearrowleft}(X)(t) + \lambda_k P(X)(t)$}};
    \draw [arrow] (start) -- (pcalc.north);
    \draw [arrow] (start) -- (divide.north);
    \draw [arrow] (pcalc) |- (pcomb);
    \draw [arrow] (divide) -- (comb);
    \draw [arrow] (comb.south) -- (lambda.north);
    \draw [arrow] (lambda) -- (pcomb);
\end{tikzpicture}}
   \caption{Simplified Case ($\tilde{P}^{\circlearrowleft}_{\text{BLEND}}$)}
    \end{subfigure}    
    \caption{Flowcharts illustrating the forecasting process for the \texttt{BLEND} Persistence operator in three cases: (a) stationary, (b) cyclostationary, and (c) simplified cyclostationary. The stationary case assumes constant statistical properties, while the cyclostationary case integrates periodic variations in mean, variance, and correlations. The simplified cyclostationary case uses an empirical approximation to further reduce complexity while retaining predictive efficiency.}
    \label{fig:both_diagrams}
\end{figure*}
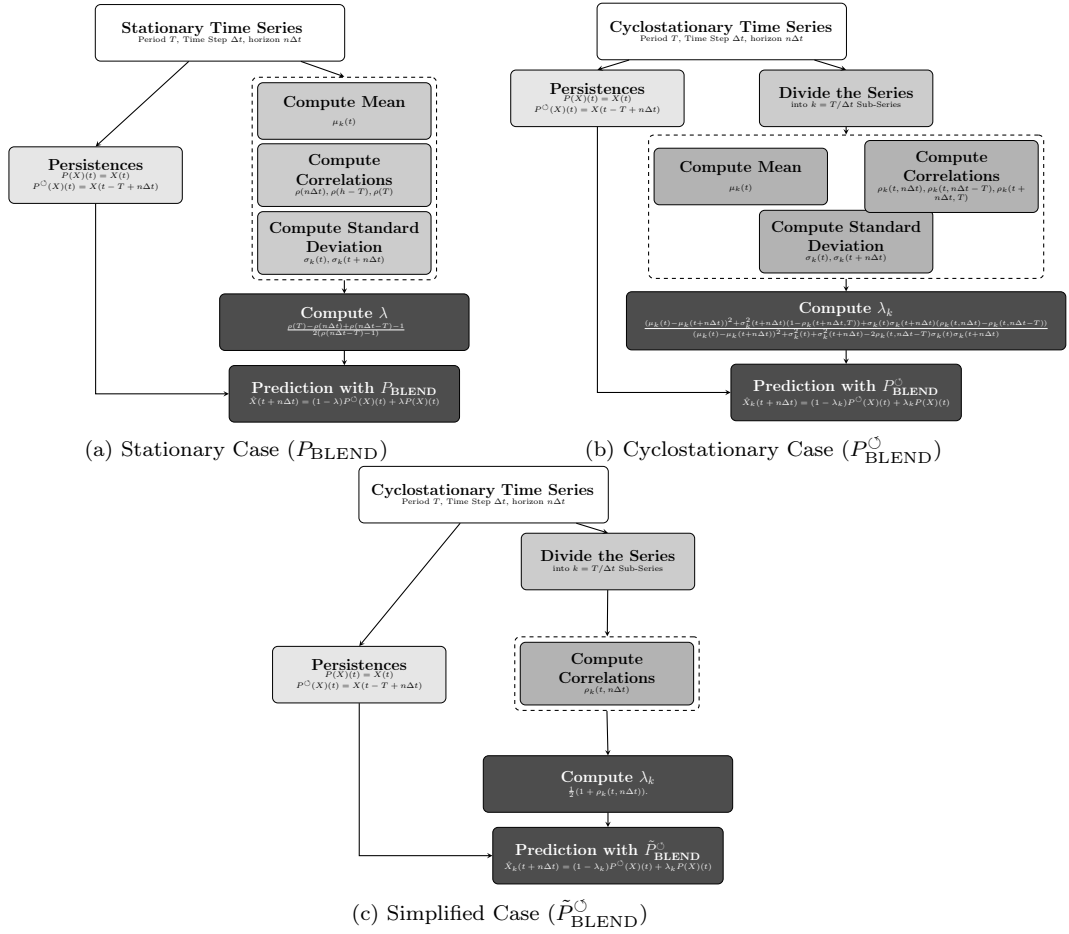
%%%%%%%%%%%%%%%%%%%%%%%%%%%%%%%%%%%%%%%%%%%%%%%%%%%%%%%%%%%%%%%%%%%%%%%%%%%%%%%%%%%%%%%%%%%%%%%%%%%%%%%
%                                                                                                     %
%                                          Results                                                    %
%                                                                                                     %
%%%%%%%%%%%%%%%%%%%%%%%%%%%%%%%%%%%%%%%%%%%%%%%%%%%%%%%%%%%%%%%%%%%%%%%%%%%%%%%%%%%%%%%%%%%%%%%%%%%%%%%
\section{Modeling Results and Analysis}
\label{sec3}
Two complementary datasets are employed to assess the proposed operators across synthetic and measured conditions. The first consists of 100 controlled cyclostationary series generated through a hybrid $\mathtt{bootstrap}$–$\mathtt{Monte\text{-}Carlo}$ scheme, enabling modulation of periodicity, noise, and correlation structure.  The second dataset corresponds to an empirical use case in renewable energy forecasting, solar irradiation, based on 30-min observations from 68 $\mathtt{SIAR}$ meteorological stations in Spain \citep{DESPOTOVIC2024} (Figure~\ref{map}, data available in \url{https://observatorioregadio.gob.es/fr/outils/siar/}). Spatial consistency is ensured by ordinary kriging, following the empirical variogram of each subregion. Forecast horizons span 30 min to 6 h in half hour increments.  An ad hoc quality control following \cite{app12178529} guarantees data integrity. Cyclostationary statistics and learning were estimated using two full years of data. Less than 5\% of the observations were missing or rejected during quality control procedures and were excluded from the analysis. The final year serves as out-of-sample validation under an \emph{incast} configuration, reconstructing past states from subsequent observations to quantify retrospective forecast skill. All methodological steps are fully described through analytical formulations and pseudocode, enabling independent reproduction of the experiments in any programming environment. The $\mathtt{SIAR}$ observational data used for validation are publicly accessible via the official network portal; however, redistribution is subject to the data provider policy. All MATLAB source code implementing the operators described in this section is openly available at \url{https://github.com/cyrilvoyant/cyclostationary-forecasting-matlab} (DOI: \url{https://doi.org/10.5281/zenodo.18812334}), enabling full reproducibility of the presented results.
\begin{figure}[!t]
    \centering
    \includegraphics[width=0.8\textwidth]{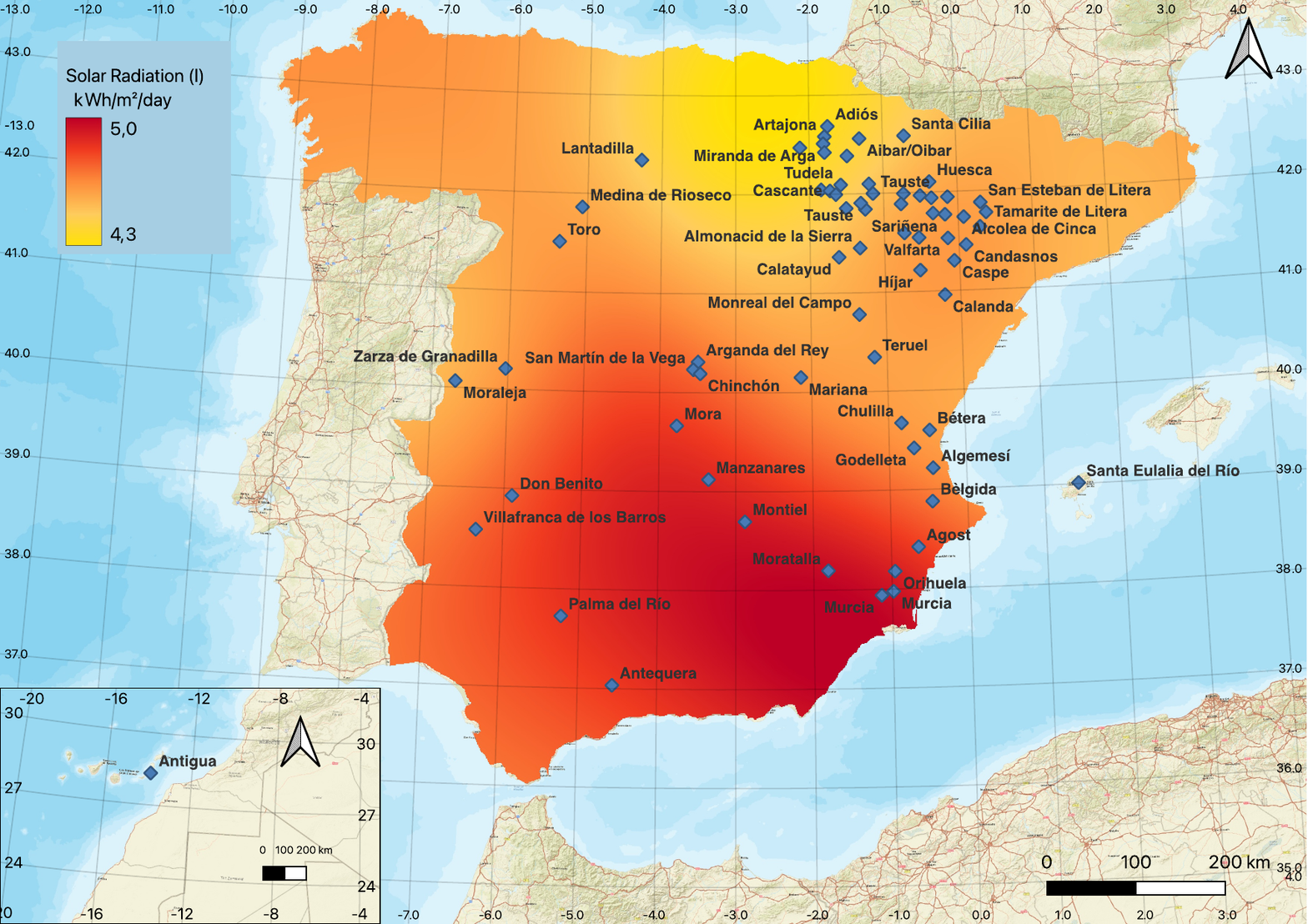} 
    \caption{Geographical distribution of the 68 $\mathtt{SIAR}$ weather stations and kriging surface of daytime averaged $I$ (2017-2020)}
    \label{map}
\end{figure}

\subsection{Generation of Synthetic Cyclostationary Time Series}
\label{sec:synth}
The Monte Carlo analysis is based on 100 independent cyclostationary time series. Additional experiments conducted with 500 replications confirmed the convergence and stability of the results. These additional tests did not alter the conclusions and are therefore not shown for brevity. Algorithm~\ref{alg:synthetic_series} generates synthetic periodically stationary series $I(t)$, $t=1,\ldots,N$, by combining a deterministic periodic component $\tilde{X}_t$ with correlated noise filtered through a low pass smoother. 
\newcommand{\com}[1]{\hfill\textcolor{blue}{\(\triangleright\,\text{#1}\)}}
\begin{doublealgorithm}[\textbf{Synthetic Cyclostationary Time Series Generation and Evaluation}]
\textbf{Input:} \\
\quad $N$: Number of points in the time series (\textit{e.g.} $N = 5000$), \\
\quad $T$: Period of the harmonic component (\textit{e.g.} $T = 40$), \\
\quad $A$: Noise amplitude scaling factor (\textit{e.g.} $A = 0.5$), \\
\quad $L$: Window size for moving average filter (\textit{e.g.} $L = 10$).\\
\textbf{Output:} \\
$X(t)$: Generated synthetic time series. \\
Metrics: $\mathtt{CV}$ (Coefficient of Variation), $\mathtt{MAR}$ (Mean Absolute Return), $\mathtt{RMSE}$ (Root Mean Square Error for periodic trend predictor), ${\rho} (1)$ (Correlation for $n \Delta t =1$).\\
\textbf{Algorithm Steps:}
\begin{algorithmic}[1]
\For{$t = 1, \ldots, N$} 
    \State Compute periodic component: $\tilde{X}(t) \gets 1000\max\left(0, \sin\left(\frac{2\pi t}{T}\right)\right)$
    \com{Ensure non-negativity}
    \State Generate Gaussian noise: $\mathtt{WN}(t) \sim \mathcal{N}(1, 1)$
    \com{White noise centered around 1}
    \State Bound noise: $\mathtt{WN}(t) \gets \min(\max(\mathtt{WN}(t), 0.2), 1.1)$
    \com{Ensure noise is between 0.2 and 1.1}
    \State Low-pass filter: 
    $\mathtt{LPF}(\mathtt{WN}(t)) \gets \frac{1}{L} \sum_{k=0}^{L-1} \mathtt{WN}(t-k), \quad \text{if } t \geq L$
    \com{$L$-size window Smoothing}
    \State Scale the filtered noise: $\mathtt{LPF}(\mathtt{WN}(t)) \gets A \cdot \mathtt{LPF}({WN}(t))$
    \com{Adjust with $A$ scaling factor}
    \State Combine signal and noise: $X(t) \gets \max(0, \tilde{X}(t) \cdot \mathtt{LPF}({WN}(t)))$
    \com{Ensure non-negativity}
\EndFor
\State Compute metrics: 
    \begin{equation*}\mathtt{CV} = \frac{\sigma}{\mu}, \quad   \mathtt{MAR} = \frac{1}{N-1} \sum_{t=2}^N |X(t) - X(t-1)|, \quad    \mathtt{RMSE} = \sqrt{\frac{1}{N} \sum_{t=1}^N (X(t) - \tilde{X}(t))^2}, \quad {\rho} (1) = \frac{C(1)}{{\sigma}^2}.\end{equation*}
\State \textbf{Return:} $X(t)$, $\mathtt{CV}$, $\mathtt{MAR}$, $\mathtt{RMSE}$, ${\rho} (1)$.
\begin{center}
\begin{minipage}[c]{0.4\textwidth}
  \centering
  \includegraphics[width=\linewidth]{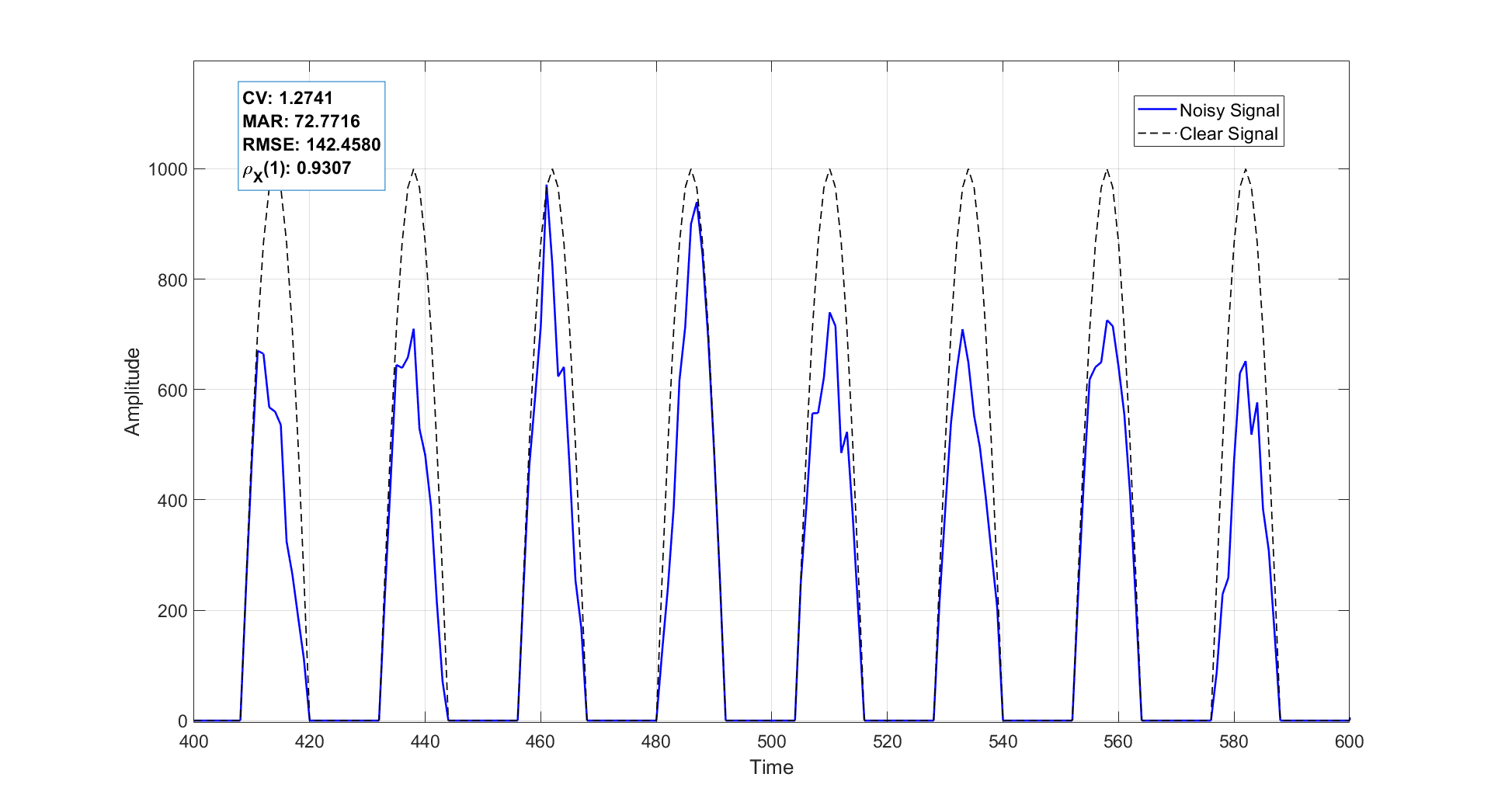}
\end{minipage}%
\begin{minipage}[c]{0.55\textwidth}
  \raggedleft
  \com{\textbf{Parameters:}}\\[2pt]
  \com{$N = 5000$}\\
  \com{$T = 40$}\\
  \com{$A = 0.5$}\\
  \com{$L = 10$}
\end{minipage}
\end{center}
\end{algorithmic}
    \label{alg:synthetic_series}
\end{doublealgorithm}

The resulting signals reproduce periodic patterns with controlled stochastic variability, emulating typical cyclostationary behavior.  Four statistical indicators characterize each realization: the Coefficient of Variation ($\mathtt{CV}$) for relative variability, the Mean Absolute Return ($\mathtt{MAR}$) for temporal volatility, the Root Mean Square Error ($\mathtt{RMSE}$) for deviation from the periodic trend, and the lag-1 autocorrelation $\rho(1)$ for short term dependence.  This design ensures rigorous control of both deterministic and stochastic properties within the generated ensemble. Figure~\ref{cyclic_parameters_plot} illustrates cyclic parameters and forecasting performance for a noisy periodic signal with period $T=24$. The left panel shows the cyclic mean, standard deviation, and correlation coefficient (undefined for all-zero phase bins), each in raw and smoothed form to highlight the periodic structure.  The right panel compares representative forecasting operators: $\mathtt{P}$, $\mathtt{P}^{\circlearrowleft}$, $\mathtt{P_S}$, $\mathtt{P}_{\mathtt{CLIPER}}$, $\mathtt{P}^{\circlearrowleft}_{\mathtt{CLIPER}}$, $\tilde{\mathtt{P}}^{\circlearrowleft}_{\mathtt{CLIPER}}$, $\mathtt{P}_{\mathtt{BLEND}}$, $\mathtt{P}^{\circlearrowleft}_{\mathtt{BLEND}}$, and $\tilde{\mathtt{P}}^{\circlearrowleft}_{\mathtt{BLEND}}$.  These results demonstrate the operators’ ability to capture both short term dependencies and periodic structures.  Cyclic parameters summarize the statistical organization of the process, while the comparative forecasts highlight differences in predictive skill and temporal symmetry preservation.
\begin{figure}[H]
    \centering
         \includegraphics[width=\textwidth]{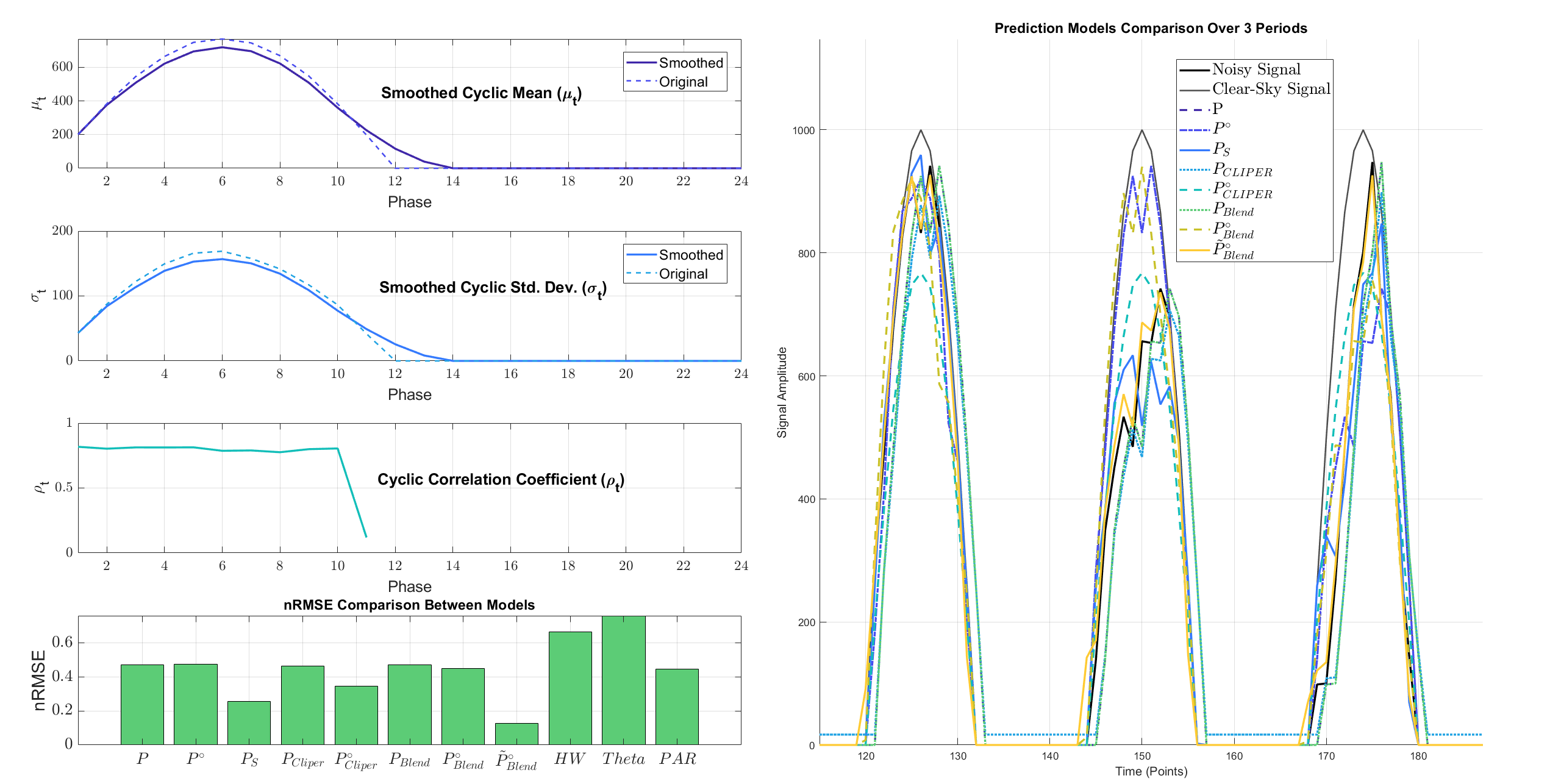}
    \caption{Cyclic parameters (mean $\mu(t)$, standard deviation $\sigma(t)$, and correlation coefficient $\rho(t)$) shown on the left and prediction model comparisons (right) are applied to a noisy periodic signal ($T=24$, $n=1$ and $\Delta t=1$). The left side illustrates the smoothed and original cyclic statistics, while the right side displays the performance of models.}
    \label{cyclic_parameters_plot}
\end{figure}

\subsection{Benchmark of Reference Models with Synthetic Data}
\label{sec:synthetic_benchmark}
Among the forecasting operators in Table~\ref{tab11}, three reference models were retained for benchmarking: Holt–Winters ($\mathtt{HW}$), the Theta method, and the Periodic Autoregressive model ($\mathtt{PAR}$).  These methods are classical analytical frameworks able to represent trends, seasonality, and cyclostationary dependencies while remaining computationally efficient.  In contrast, the persistence based operators proposed in this work are parameter free and serve as analytical baselines for evaluating the potential gains achievable with model driven approaches. For $\mathtt{HW}$ and Theta, parameters were optimized via $\mathtt{MATLAB}$ \texttt{fmincon} Interior Point solver; ten independent runs were averaged to reduce local minima bias.  The $\mathtt{PAR}$ model was implemented with order $p = T$, exploiting the intrinsic periodicity of non seasonally adjusted signals.  A Monte Carlo protocol of 100 realizations was applied across forecast horizons $n\Delta t \in [1,9]$.  Each series used randomized parameters $T \sim \mathcal{U}(10,50)$ and $L \sim \mathcal{U}(1,10)$ (see Algorithm~\ref{alg:synthetic_series}), to ensure statistical diversity and robustness.  Figure~\ref{fig:nrmse_vs_variability} shows the relation between normalized RMSE and signal variability, with trends fitted by third order polynomials. 
\begin{figure}[H]
    \centering
    \includegraphics[width=\textwidth]{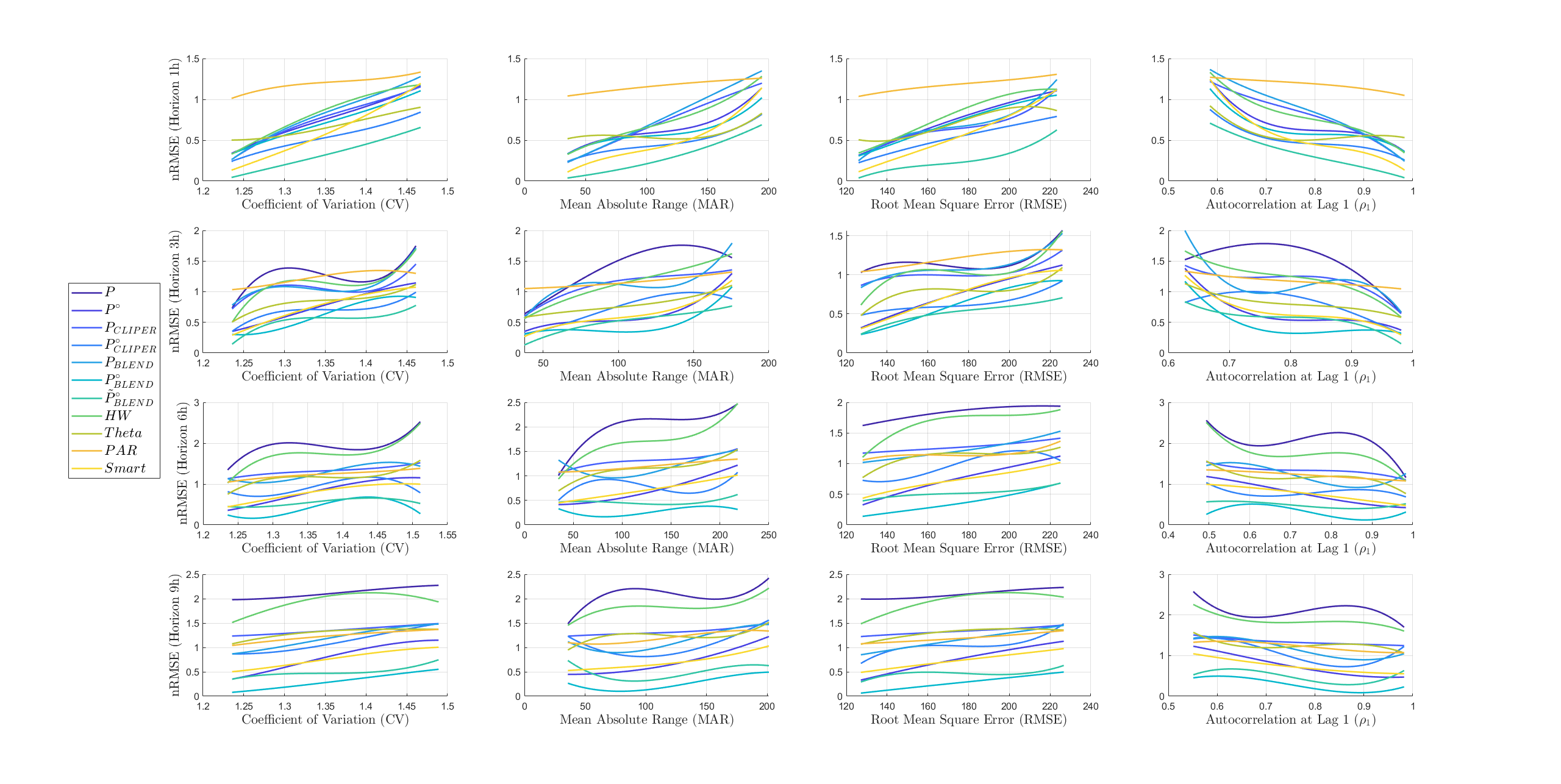}
    \caption{Normalized Root Mean Square Error (\texttt{nRMSE}) plotted against variability metrics for synthetic time series across multiple prediction horizons. 
    Each row represents a specific horizon (1, 3, 6, and 9 hours), and each column corresponds to a variability metric: 
    Coefficient of Variation (\texttt{CV}), Mean Absolute Return (\texttt{MAR}), Root Mean Square Error (\texttt{RMSE}), and Autocorrelation at Lag 1. 
    Trend lines using third order polynomials highlight the relationship between \texttt{nRMSE} and the variability characteristics of the data, 
    illustrating the impact of variability on model performance.}
    \label{fig:nrmse_vs_variability}
\end{figure}
As the horizon increases, $\mathtt{nRMSE}$ sensitivity to variability weakens, reflecting the diminishing relevance of short term correlation beyond $\sim$6 h. Figure~\ref{fig:boxplot_nrmse_synthetic} summarizes the distribution of $\mathtt{nRMSE}$ across models and horizons, confirming the expected performance hierarchy among reference and persistence based operators.
\begin{figure}[H]
    \centering
    \includegraphics[width=\textwidth]{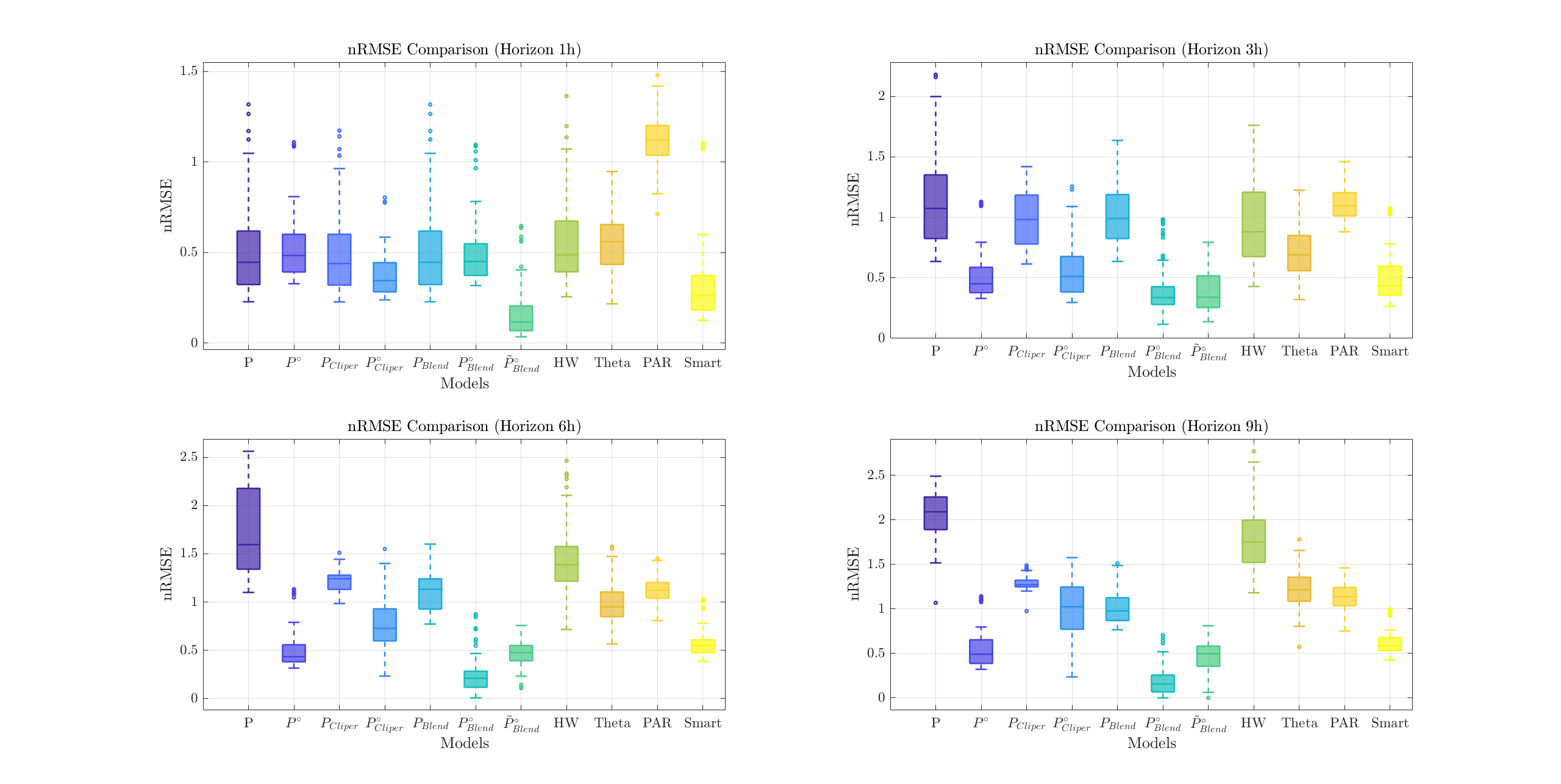}
    \caption{Boxplots of the Normalized Root Mean Square Error (\texttt{nRMSE}) for different models across prediction horizons (1h, 3h, 6h, 9h). The x-axis represents the forecasting models, while the y-axis shows the distribution of \texttt{nRMSE} values across synthetic time series simulations.}
    \label{fig:boxplot_nrmse_synthetic}
\end{figure}
The Kruskal-Wallis test confirms significant differences between models across all horizons (\textit{p}-values $\ll 0.05$). Pairwise Mann-Whitney-Wilcoxon tests further refine these rankings. 
For short term horizons (1h and 3h), $\tilde{\mathtt{P}}^{\circlearrowleft}_{\mathtt{BLEND}}$ and $\mathtt{P}^{\circlearrowleft}_{\mathtt{BLEND}}$ consistently emerge as the best models, with a statistically significant advantage at 1h (\textit{p}-value $\ll 0.05$), but no significant difference at 3h (\textit{p}-value = 0.8613).
For 6h, $\mathtt{P_s}$ and $\mathtt{P}^{\circlearrowleft}$ stand out, with $\mathtt{P_s}$ showing a statistically significant improvement over $\mathtt{P}^{\circlearrowleft}$ (\textit{p}-value $\ll 0.05$).
At 9h, $\mathtt{P}_\mathtt{CLIPER}$ remains the best performing model, significantly outperforming $\mathtt{P}^{\circlearrowleft}$ (\textit{p}-value $\ll 0.05$).
Overall, models can be classified into three categories:
\begin{itemize}
     \item[\(\checkmark\)]  \textbf{Best}: $\tilde{\mathtt{P}}^{\circlearrowleft}_{\mathtt{BLEND}}$, $\mathtt{P}^{\circlearrowleft}_{\mathtt{BLEND}}$, $\mathtt{P_s}$, $\mathtt{P}^{\circlearrowleft}$  
     \,\, \textbf{Med}: $\mathtt{P}^{\circlearrowleft}_\mathtt{CLIPER}$, $\mathtt{Theta}$, $\mathtt{P}^{\mathtt{BLEND}}$  
    \,\,  \textbf{Worst}: $\mathtt{P}$, $\mathtt{P}_\mathtt{CLIPER}$, $\mathtt{PAR}$, $\mathtt{HW}$.  
\end{itemize}
These conclusions remain conditioned by the synthetic experiments, where trend and periodicity are explicitly defined.  Persistence based models demonstrate strong efficiency under such controlled conditions, whereas statistical methods such as $\mathtt{Theta}$, $\mathtt{PAR}$, and $\mathtt{HW}$ show limited adaptability to cyclic trends. Subsequent analyses therefore focus on a restricted subset of operators:
$\tilde{\mathtt{P}}^{\circlearrowleft}_\mathtt{BLEND}$ and $\mathtt{P}^{\circlearrowleft}_\mathtt{BLEND}$, which exhibited the most consistent performance; $\mathtt{P}_{\mathtt{CLIPER}}$, retained for its analytical simplicity;  
and $\mathtt{P_S}$ (Smart Persistence), the most widely used naive predictor in operational solar forecasting. For completeness, the canonical persistence models $\mathtt{P}$ and $\mathtt{P}^{\circlearrowleft}$ are also included, since the latter performs notably well at extended horizons. Two lightweight learning based baselines are incorporated for comparison:  (i) a linear AutoRegressive model ($\mathtt{AR}$, \citep{GUCYETMEZ2025981}) applied to stationary transformed data (\textit{e.g.}, clear sky ratio \citep{solar2040026}) and (ii) an Extreme Learning Machine ($\mathtt{ELM}$, \citep{VOYANT2025113490}) trained directly on raw irradiance without normalization. All other complex or hybrid models are intentionally excluded to maintain focus on physically interpretable and analytically tractable formulations. Finally, convexity and boundedness effects identified in Lemmas~\ref{problem3}--\ref{problem} (\ref{CLIPER_cyclo_statio} \& \ref{BLEND_cyclo_statio}) further justify the preference for the simplified operator $\tilde{\mathtt{P}}^{\circlearrowleft}_{\mathtt{BLEND}}$, which ensures stability and interpretability without compromising analytical coherence.

\subsection{Benchmark of Reference Models with \texttt{SIAR} Data}
This section compares forecasting operators on the $\mathtt{SIAR}$ irradiance dataset (68 stations, 30-minute resolution). In the general formulation, $k_r(t)$ denotes the relative or detrended signal obtained after removal of its deterministic periodic component. In the specific case of solar irradiance, this corresponds to normalization by the clear sky model, \textit{i.e.} $k_r(t) = I(t)/I_{\mathtt{clr}}(t)$, where $I_{\mathtt{clr}}(t)$ represents the theoretical cloud free irradiance. Forecasts are generated every 30 minutes, with horizons ranging from 30 minutes to 6 hours. All models are applied in an incast scheme, without data assimilation or retraining across the test year. The following reference models are retained:
\begin{itemize}
\scriptsize
    \item[$\checkmark$] $\mathtt{P}$ — simple persistence: baseline forecast using last observed irradiance; no adaptation, poor under rapid transitions;
    \item[$\checkmark$] $\mathtt{P}^{\circlearrowleft}$ — cyclic persistence: injects diurnal periodicity without modeling variability; useful for longer horizons;
    \item[$\checkmark$] $\mathtt{P_{BLEND}}$ — naive blend of current and lagged irradiance; simple, robust, and model free, mid level complexity;
    \item[$\checkmark$] $\mathtt{P^{\circ}_{\mathtt{BLEND}}}$ — cyclic version of the blend; better performance under daily trends, high complexity;
    \item[$\checkmark$] $\widetilde{\mathtt{P}}^{\circlearrowleft}_{\mathtt{BLEND}}$ — temporally smoothed cyclostationary blend; balances simplicity and forecast consistency;
    \item[$\checkmark$] $\mathtt{EL}$ — extreme learning machine: data driven model using previous 48~h; best performance without clear sky dependency;
    \item[$\checkmark$] $\mathtt{AR}$ — autoregressive model (order $p$ from $\mathtt{PACF}$); depends on temporal structure and $I_{\mathtt{clr}}$ as regressor;
    \item[$\checkmark$] $\mathtt{Smart}$ — clear sky persistence: corrects for atmospheric attenuation using external $I_{\mathtt{clr}}$ (\textit{e.g.}, McClear);
    \item[$\checkmark$] $\mathtt{P_{CLIPER}}$ — persistence with climatological scaling (convex combination between persistence and mean): uses clear sky index and trend normalization; good average behavior but dependent on sky modeling.
\end{itemize}
Each model outputs forecasts $\widehat{I}(t + n\Delta t)$ from available information at $t$ (or $t + n\Delta t- T$ for periodic inputs). Errors are evaluated against $I(t + n\Delta t)$ using $\mathtt{nRMSE}$ and $\mathtt{nMAE}$ (normalized by the average value of irradiance \citep{YANG202020}).
Trainless models ($\mathtt{P}$, $\mathtt{P}^{\circlearrowleft}$, $\widetilde{\mathtt{P}}^{\circlearrowleft}_{\mathtt{BLEND}}$) offer strong interpretability and zero configuration, suitable for embedded forecasting. In contrast, $\mathtt{AR}$ and $\mathtt{EL}$ use historical information and adapt to local patterns, at the cost of increased complexity. The benchmark thus covers both operational simplicity and adaptive modeling capabilities.

Figure~\ref{fig:nrmse_nmae_horizon} indicates that the machine learning model ($\mathtt{EL}$) delivers the lowest average $\mathtt{nRMSE}$ and $\mathtt{nMAE}$ across all forecast horizons, outperforming all other deterministic models despite not using any clear sky input. 
\begin{figure}[H]
    \centering
    \includegraphics[width=0.8\textwidth]{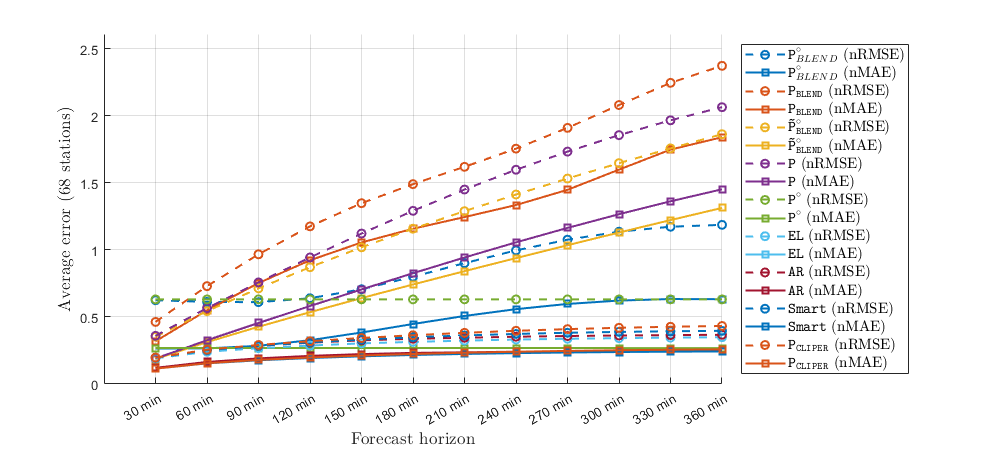}
    \caption{Average normalized forecasting errors ($\mathtt{nRMSE}$ and $\mathtt{nMAE}$) as a function of lead time, ranging from 30 to 360 minutes. Each curve corresponds to a forecasting model, and values are averaged over the full $\mathtt{SIAR}$ network (68 stations). This representation emphasizes the relative scalability and asymptotic behavior of the error dynamics across models.}
    \label{fig:nrmse_nmae_horizon}
\end{figure}
By contrast, the classical reference models (Smart Persistence ($\mathtt{P_S}$), autoregressive ($\mathtt{AR}$), and $\mathtt{P_{CLIPER}}$) each depend on an external clear sky irradiance model (\textit{e.g.}, $\mathtt{McClear}$ \citep{EISSA201517}).  In our implementation, $\mathtt{P_{CLIPER}}$ operates on the stationary clear sky index $k_r(t) = I(t)/I_\mathtt{clr}(t)$, ensuring an idealized trend baseline.  
This dependence improves performance when clear sky inputs are accurate but may introduce bias or degradation if unavailable or inconsistent, since the $\mathtt{BLEND}$ family operates directly on raw irradiance without external normalization. In persistence based baselines, introducing cyclostationary adjustments yields systematic gains: the phase corrected variants ($\mathtt{P}^{\circlearrowleft}$, $\mathtt{P^{\circlearrowleft}_{CLIPER}}$) consistently outperform their naive counterparts ($\mathtt{P}$, $\mathtt{P_{CLIPER}}$), with widening performance gaps at extended horizons.  This is confirmed in Figure~\ref{fig:violinplots}, where cyclo adjusted models display narrower $\mathtt{nRMSE}$ distributions at longer lead times.  Among these, the simplified $\tilde{\mathtt{P}}^{\circlearrowleft}_{\mathtt{BLEND}}$ offers the most practical balance, maintaining accuracy across horizons without clear sky inputs and with minimal computational cost.  Non-parametric Wilcoxon tests ($p<0.01$) confirm the statistical significance of inter model differences, underscoring that integrating periodic symmetry into persistence baselines yields measurable and operationally meaningful improvements.
\begin{figure}[H]
    \centering
    \includegraphics[width=0.7\textwidth]{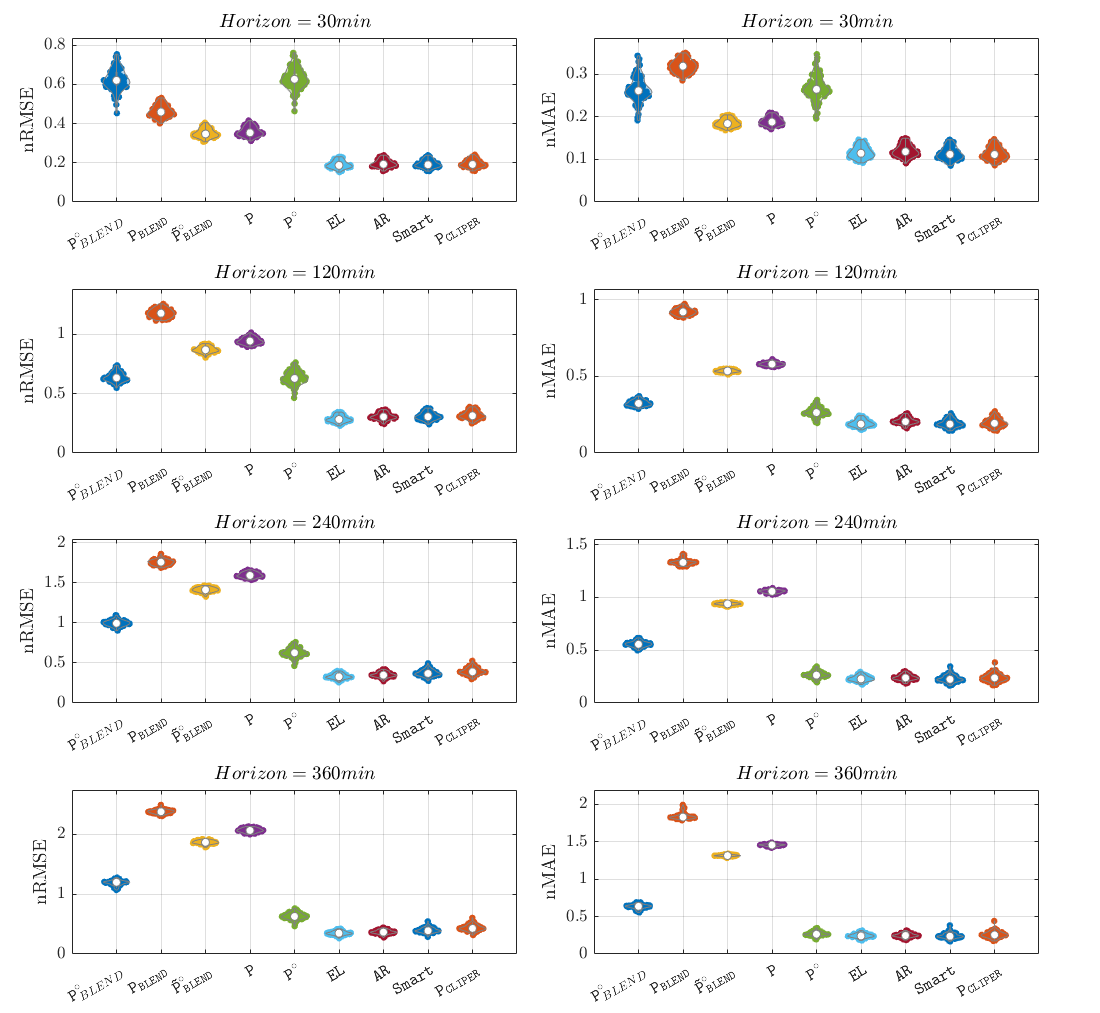}
    \caption{Distribution of normalized forecasting errors across stations for four selected horizons (30, 120, 240, and 360 minutes). Violin plots summarize both the central tendency and dispersion of the $\mathtt{nRMSE}$ (left column) and $\mathtt{nMAE}$ (right column) for each model. This multidimensional view reveals the sensitivity of models to temporal scales and spatial variability.}
    \label{fig:violinplots}
\end{figure}
%%%%%%%%%%%%%%%%%%%%%%%%%%%%%%%%%%%%%%%%%%%%%%%%%%%%%%%%%%%%%%%%%%%%%%%%%%%%%%%%%%%%%%%%%%%%%%%%%%%%%%%
%                                                                                                     %
%                                          Conclusion                                                 %
%                                                                                                     %
%%%%%%%%%%%%%%%%%%%%%%%%%%%%%%%%%%%%%%%%%%%%%%%%%%%%%%%%%%%%%%%%%%%%%%%%%%%%%%%%%%%%%%%%%%%%%%%%%%%%%%%
\section{Conclusions}
\label{sec4}
This work establishes a symmetry based analytical framework for forecasting cyclostationary processes in energy systems. By deriving the $\mathtt{BLEND}$ operator family (together with its simplified variant $\widetilde{\mathtt{P}}^{\circlearrowleft}_{\mathtt{BLEND}}$) the transition between stationary and periodically stationary regimes is exposed through a single closed form parameter, $\tilde{\lambda}^{\circlearrowleft}_{\mathtt{BLEND}}=\tfrac{1}{2}(1+\rho)$. This coefficient expresses the intrinsic symmetry between instantaneous memory and phase aligned recurrence, defining the analytical limit of training free forecasting under mean square optimality. The resulting operators preserve periodic covariance, remain physically interpretable, and operate without any external or empirical calibration.
Validation on synthetic and measured solar irradiation datasets demonstrates that explicitly embedding cyclic variance and covariance terms yields measurable accuracy gains over classical persistence baselines, even in the absence of trend normalization (ratio to trend, \textit{e.g.} clear sky model in solar energy prediction). While reference approaches such as $\mathtt{AR(p)}$, $\mathtt{EL}$, or $\mathtt{P_S}$ can outperform under optimized conditions, they require parameter fitting, prior normalization, or learned representations that limit reproducibility. In contrast, the $\mathtt{BLEND}$ family achieves comparable robustness at negligible computational cost, providing a reproducible analytical baseline for deterministic forecasting across horizons. Beyond the specific case of solar irradiance, this framework applies to any energy or environmental process driven by cyclic forcing (such as diurnal demand, wind power modulation, or temperature driven loads) where preserving phase dependent statistics is important. It thus provides a unifying mathematical kernel for symmetry aware prediction, bridging physics based and data driven paradigms. Future work will address time varying periodicity, multivariate coupling, and ensemble extensions, paving the way toward fully interpretable, symmetry consistent forecasting architectures across renewable and hybrid energy systems.

\section*{Author contributions}
\begin{itemize}
    \item \textbf{Cyril Voyant:} Conceptualization, Project leader, Methodology, Software, Formal analysis, Investigation, Writing – original draft.
    \item \textbf{Luis Antonio Garcia-Gutierrez:} Writing – review \& editing, Data curation, Software, Validation, Writing – original draft.
    \item \textbf{Candice Banes:} Conceptualization, Writing – review \& editing, Methodology, Formal analysis, Data curation. 
    \item \textbf{Gilles Notton:} Conceptualization, Validation, Writing – review \& editing, Formal analysis.
    \item \textbf{Milan Despotovic:} Conceptualization, Validation, Writing – review \& editing, Data curation.
    \item \textbf{Zaher Mundher Yaseen:} Conceptualization, Validation, Supervision, Writing – review \& editing, Formal analysis.
\end{itemize}

\section*{Competing interests} 
The authors declare that they have no known competing financial interests or personal relationships that could have appeared to influence the work reported in this paper. 

\section*{Data availability}
The datasets generated during and/or analysed during the current study are available from the corresponding author on reasonable request. 

\iffalse
\section*{Acknowledgements}
This research was partially funded by the $\mathtt{ANR}$ under the $\mathtt{FINE4CAST}$ project (ANR reference: $\mathtt{22\text{-}PETA\text{-}0008}$), whose support is gratefully acknowledged. We also thank \texttt{Elsevier} and the \texttt{Scopus} platform for providing structured access to bibliographic metadata, which was essential to the successful implementation of the text mining pipeline in this study.
\fi
%%%%%%%%%%%%%%%%%%%%%%%%%%%%%%%%%%%%%%%%%%%%%%%%%%%%%%%%%%%%%%%%%%%%%%%%%%%%%%%%%%%%%%%%%%%%%%%%%%%%%%%
%                                                                                                     %
%                                          APPENDICES                                                 %
%                                                                                                     %
%%%%%%%%%%%%%%%%%%%%%%%%%%%%%%%%%%%%%%%%%%%%%%%%%%%%%%%%%%%%%%%%%%%%%%%%%%%%%%%%%%%%%%%%%%%%%%%%%%%%%%%
\appendix
\section{Complexity of an Algorithm}
\label{complexity}
The complexity of an algorithm measures the time or memory space required as a function of the input size $n$. The $\mathcal{O}(.)$ notation is used to express this complexity by estimating how quickly the execution time grows as $n$ increases. Considering an algorithm with $\mathcal{O}(n)$, its execution time increases proportionally to the input size.
For example, in an Autoregressive ($\mathtt{AR}$) model of order $p$, the next value $I(t+1)$ is forecasted based on a linear combination of the previous $p$ values: \begin{equation}
I(t+1) = \phi_1 I(t) + \phi_2 I(t-1) + \dots + \phi_p I(t-p+1) + \epsilon(t),
\end{equation}
where $\phi_i$ are the coefficients and $\epsilon(t)$ is a white noise term. The training complexity (\cyr{$n \gg p$}) of the $\mathtt{AR}$ model is $\mathcal{O}(n\mathtt{P}^2)$ (inversion of $\mathbf{X}^\top \mathbf{X}$ in the least squares optimization process), as fitting the model involves solving a system of equations derived from $n$ observations, requiring the inversion of a $p \times p$ matrix. The forecasting complexity is $\mathcal{O}(p)$, as it depends only on the $p$ most recent observations.

\cyr{
\subsection{Recursive state-space models (\texttt{HW}, \texttt{ETS}, Theta)}
These models rely on recursive updates of a small number of state variables (level $L(t)$, trend $b(t)$, seasonal component $S(t)$). Each update involves a constant number of operations.
Training requires a single pass through the data \(\mathcal{C}_T = \mathcal{O}(n).\) Forecasting over $n$ steps is performed via recursive propagation, \(\mathcal{C}_F = \mathcal{O}(n). \) For the Theta method, $k$ transformed series are processed, so \(\mathcal{C}_T = \mathcal{O}(n k) \text{and} \, \mathcal{C}_F = \mathcal{O}(n).\)}

\cyr{
\subsection{Periodic autoregressive models (\texttt{PAR})}
The $\mathtt{PAR}$ model involves phase-dependent coefficients estimated separately for each phase. This leads to a training complexity dominated by $\mathcal{O}(n p^2)$ when $n/T \gg p$ (see $\mathtt{AR}$ exemple at the begining of the section). Forecasting requires a linear combination of the $p$ most recent values $\mathcal{C}_F = \mathcal{O}(p).$
}

\cyr{
\subsection{Regression-based models (Prophet)}
Prophet is based on additive regression combining trend and seasonal components, typically represented using $k$ Fourier terms. Training involves the construction of a design matrix of size $n \times k$ and parameter estimation through optimization procedures, leading to $\mathcal{C}_T = \mathcal{O}(n k).$ Forecasting consists in evaluating the fitted model over the prediction horizon. Each prediction requires evaluating the trend and seasonal components, resulting in $\mathcal{C}_F = \mathcal{O}(n k),$ which reduces to $\mathcal{O}(n)$ when $k$ is fixed.
}

\cyr{
\subsection{State-space models with seasonal differencing (\texttt{SARIMA})}
$\mathtt{SARIMA}$ models involve autoregressive and moving average components with seasonal differencing. Training relies on parameter estimation procedures combining iterative optimization and matrix operations. The dominant computational cost arises from processing $n$ observations with $p$-dimensional structures, leading to $\mathcal{C}_T = \mathcal{O}(n p^2),$ ($n \gg p$). Forecasting is driven by the autoregressive structure and requires evaluating $p$ past terms $\mathcal{C}_F = \mathcal{O}(p).$
}

\cyr{
\subsection{Models with implementation-dependent complexity (\texttt{STL}, \texttt{TBATS}, \texttt{BEATS})}
For $\mathtt{STL}$, $\mathtt{TBATS}$, and $\mathtt{N\text{-}BEATS}$, both training and forecasting complexities depend strongly on implementation choices (\textit{e.g.}, smoothing parameters, optimization procedures, or network architecture). As a result, no fixed computational complexity is reported for these models.}

\section{Derivation for the \texttt{CLIPER} Operator under Cyclostationary and Stationary Assumptions}
\label{CLIPER_cyclo_statio}
This appendix derives the analytical expression of the optimal blending coefficient under a cyclostationary covariance structure and provides the justification for the simplified formulation adopted in the main text. The $\mathtt{CLIPER} $ operator for a periodically stationary process $I(t)$, where statistical properties exhibit periodicity with a fixed period $T$, predicts future values as:
\begin{equation}
\mathtt{P}^{\circlearrowleft}_{\mathtt{CLIPER}}(I)(t) = \lambda^{\circlearrowleft}_{\mathtt{CLIPER}} \mu(t) + (1 - \lambda^{\circlearrowleft}_{\mathtt{CLIPER}}) I(t),
\end{equation}
where the weighting parameter $\lambda^{\circlearrowleft}_{\mathtt{CLIPER}}$ is derived by minimizing the Mean Squared Error ($\mathtt{MSE}$).
\begin{proof}
The prediction error is defined as:
\begin{equation}
\epsilon(t + n\Delta t) = I(t + n\Delta t) - \mathtt{P}^{\circlearrowleft}_{\mathtt{CLIPER}}(I)(t),
\end{equation}
and the corresponding $\mathtt{MSE} $  is based on $\mathtt{MSE} = \mathbb{E}\left[\epsilon ^2\right]$:
\begin{align*}
&\mathtt{MSE} = \\ &\mathbb{E}\left[\left(I(t + n\Delta t) - \lambda^{\circlearrowleft}_{\mathtt{CLIPER}} \mu(t + n\Delta t) - (1 - \lambda^{\circlearrowleft}_{\mathtt{CLIPER}}) I(t)\right)^2\right].
\end{align*}
Expanding the square, substituting the statistical properties of cyclostationary processes:
\begin{align*}
\mathbb{E}[I(t + n\Delta t)^2] &= \sigma(t + n\Delta t)^2 + \mu(t + n\Delta t)^2, \\
\mathbb{E}[I(t + n\Delta t) I(t)] &= C(t, n\Delta t) + \mu(t + n\Delta t) \mu(t), \\
\mathbb{E}[I(t)^2] &= \sigma(t)^2 + \mu(t)^2, \\
\mathbb{E}[I(t)] &= \mu(t)
\end{align*}
the $\mathtt{MSE} $  becomes:
\begin{align*}
\mathtt{MSE} &= \sigma(t + n\Delta t)^2 + \mu(t + n\Delta t)^2 - 2\lambda^{\circlearrowleft}_{\mathtt{CLIPER}} \mu(t + n\Delta t)^2 \\
&\quad - 2(1 - \lambda^{\circlearrowleft}_{\mathtt{CLIPER}})(C(t, n\Delta t) + \mu(t + n\Delta t) \mu(t)) + \\ &(\lambda^{\circlearrowleft}_{\mathtt{CLIPER}})^2 \mu(t + n\Delta t)^2 + (1 - \lambda^{\circlearrowleft}_{\mathtt{CLIPER}})^2 (\sigma(t)^2 + \mu(t)^2) \\ &\quad + 2\lambda^{\circlearrowleft}_{\mathtt{CLIPER}} (1 - \lambda^{\circlearrowleft}_{\mathtt{CLIPER}}) \mu(t + n\Delta t) \mu(t).
\end{align*}
Taking the derivative of the $\mathtt{MSE} $  with respect to $\lambda^{\circlearrowleft}_{\mathtt{CLIPER}}$ and solving for the minimizer yields:
\begin{align}
&\lambda^{\circlearrowleft}_{\mathtt{CLIPER}} =\\ &\frac{\left(\mu(t + n\Delta t) - \mu(t) \right)^2 + \sigma(t)^2 - \rho (t,n\Delta t) \sigma(t) \sigma (t+n\Delta t)}{\left(\mu(t + n\Delta t) - \mu(t) \right)^2 + \sigma(t)^2 }.
\end{align}
$\qed$
\end{proof}
\begin{lemma}[Convexity of the $\mathtt{MSE} $ ]
The Mean Squared Error ($\mathtt{MSE}$) is convex with respect to $\lambda$ for any cyclostationary or stationary process with finite variance and mean.
\end{lemma}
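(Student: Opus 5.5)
The plan is to observe that, for a fixed time $t$ and horizon $n\Delta t$, the prediction error of the $\mathtt{CLIPER}$ operator is affine in $\lambda$. The $\mathtt{MSE}$ is then the expectation of the square of an affine function of $\lambda$, hence a quadratic polynomial in $\lambda$ whose leading coefficient is non negative. We can write
\begin{equation*}
\epsilon_\lambda(t+n\Delta t) = A - \lambda B, \qquad A = I(t+n\Delta t) - I(t), \quad B = \mu(t) - I(t),
\end{equation*}
where we take $\mu(t)$ as in the definition. The identical argument works with $\mu(t+n\Delta t)$, which is the form used in the expansion above: only the deterministic shift inside $B$ changes. In the stationary case $\mu(t)\equiv\mu$, so the same formula covers both settings stated in the lemma.

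The steps are as follows.
\begin{enumerate}
\item Use the finite mean and variance hypotheses to ensure that $A$ and $B$ lie in $L^2$. Then $\mathbb{E}[A^2]$, $\mathbb{E}[AB]$ and $\mathbb{E}[B^2]$ are finite by Cauchy--Schwarz.
\item Expand to get $\mathtt{MSE}(\lambda) = \mathbb{E}[A^2] - 2\lambda\,\mathbb{E}[AB] + \lambda^2\,\mathbb{E}[B^2]$.
\item Compute the second derivative, $\mathtt{MSE}''(\lambda) = 2\,\mathbb{E}[B^2] \ge 0$. This gives convexity.
\end{enumerate}

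As a check, $\mathbb{E}[B^2] = \sigma^2(t) + (\mu(t)-\mu(t+n\Delta t))^2$ when the shifted mean is used, and simply $\sigma^2(t)$ when $\mu(t)$ is used. This matches the denominator of $\lambda^{\circlearrowleft}_{\mathtt{CLIPER}}$ in the derivation above. The cyclostationary structure only enters through the values of these moments and not through the sign of the curvature.

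An alternative route, which I would mention as a remark, is the pointwise argument. For each realization, $\lambda\mapsto(A-\lambda B)^2$ is convex, and expectation preserves convexity because it is a nonnegative linear combination. This avoids computing any moments.

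The only delicate point is strict convexity, which is what gives uniqueness of the minimizer. Strict convexity holds iff $\mathbb{E}[B^2]>0$, that is, unless $I(t)$ is almost surely equal to the predicted mean. I would state this as the non degeneracy condition already invoked in the remark on the denominator: the process is non constant at phase $t$, so $\sigma(t)>0$. In the degenerate case the $\mathtt{MSE}$ is constant in $\lambda$, which is still convex, so the lemma holds as stated in all cases.
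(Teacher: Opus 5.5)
Your proof is correct and is essentially the paper's argument: both express the $\mathtt{MSE}$ as a quadratic in $\lambda$ and deduce convexity from the nonnegativity of its leading coefficient. Your $\mathbb{E}[B^2]=(\mu(t+n\Delta t)-\mu(t))^2+\sigma^2(t)$ is exactly the paper's coefficient $A$, with a unique minimizer when it is strictly positive, and writing the error as an affine function of $\lambda$ makes that nonnegativity immediate while handling the paper's switch between $\mu(t)$ in the definition and $\mu(t+n\Delta t)$ in the expansion.
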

\begin{proof}
The $\mathtt{MSE} $  can be expressed in the form 
$ \mathtt{MSE}(\lambda^{\circlearrowleft}_{\mathtt{CLIPER}}) = A(\lambda^{\circlearrowleft}_{\mathtt{CLIPER}})^2 + \mathcal{O}(\lambda^{\circlearrowleft}_{\mathtt{CLIPER}}), $
where the leading coefficient is 
$ A = \left(\mu(t + n\Delta t) - \mu(t) \right)^2 + \sigma(t)^2 . $
Since $A$ is a sum of squared terms and variances, it satisfies $A \geq 0$. By the properties of quadratic functions and the Second Derivative Test, the $\mathtt{MSE} $  is convex.
To find the minimum of $ \mathtt{MSE}(\lambda^{\circlearrowleft}_{\mathtt{CLIPER}}) $, we compute its derivative 
$ \frac{d}{d\lambda^{\circlearrowleft}_{\mathtt{CLIPER}}} \mathtt{MSE}(\lambda^{\circlearrowleft}_{\mathtt{CLIPER}}) = 2A\lambda^{\circlearrowleft}_{\mathtt{CLIPER}} - 2B, $
where $B$ is defined from the linear terms in $\lambda^{\circlearrowleft}_{\mathtt{CLIPER}}$. Setting the derivative to zero yields
$ \lambda^{\circlearrowleft}_{\mathtt{CLIPER}} = \frac{B}{A}. $
Since $A$ is composed of squared terms, then $A > 0$ (except in the degenerate case where both the mean difference and variance are zero) and the solution $ \lambda^{\circlearrowleft}_{\mathtt{CLIPER}} = \frac{B}{A} $ uniquely minimizes the $\mathtt{MSE} $ .
\qed
\end{proof}

\begin{lemma}[Boundedness of $\lambda^{\circlearrowleft}_{\mathtt{CLIPER}}$]
\label{problem3}
For a stable cyclostationary process with finite variance and mean, and under the assumption that the variances are not excessively large compared to the means and that the correlation $\rho(t, n\Delta t)$ is not too strong, the parameter $\lambda^{\circlearrowleft}_{\mathtt{CLIPER}}$ remains approximately bounded between $0$ and $1$.
\end{lemma}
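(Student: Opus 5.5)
We start from the closed form obtained in the preceding derivation. Write $D=(\mu(t+n\Delta t)-\mu(t))^2\ge 0$, $\sigma_0=\sigma(t)$, $\sigma_n=\sigma(t+n\Delta t)$ and $\rho=\rho(t,n\Delta t)$. The coefficient then reads
\begin{equation*}
\lambda^{\circlearrowleft}_{\mathtt{CLIPER}} = 1-\frac{\rho\,\sigma_0\sigma_n}{D+\sigma_0^2}.
\end{equation*}
The convexity lemma above gives $A=D+\sigma_0^2>0$ in the non-degenerate case, so this expression is well defined. Bounding $\lambda^{\circlearrowleft}_{\mathtt{CLIPER}}$ in $[0,1]$ is therefore equivalent to bounding the single ratio $R=\rho\,\sigma_0\sigma_n/(D+\sigma_0^2)$ in $[0,1]$.

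\textbf{Step 1: the upper bound $\lambda^{\circlearrowleft}_{\mathtt{CLIPER}}\le 1$.} This holds exactly when $R\ge 0$. Since $D+\sigma_0^2>0$ and $\sigma_0\sigma_n\ge 0$, this is equivalent to $\rho\ge 0$. The lemma's hypothesis that the correlation is ``not too strong'' will be read as excluding strongly anti-correlated regimes. This is natural for the positively autocorrelated energy processes considered at the horizons of interest. If $\rho<0$, we get $\lambda^{\circlearrowleft}_{\mathtt{CLIPER}}=1+|R|$, and the excess over $1$ is controlled by the same estimate as in Step 2.

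\textbf{Step 2: the lower bound $\lambda^{\circlearrowleft}_{\mathtt{CLIPER}}\ge 0$.} This holds exactly when $R\le 1$, that is, when
\begin{equation*}
\rho\,\sigma_0\sigma_n\le D+\sigma_0^2 .
\end{equation*}
Using $|\rho|\le 1$, it suffices to have $\sigma_0\sigma_n\le D+\sigma_0^2$. This holds automatically whenever $\sigma_n\le\sigma_0$, and more generally whenever $\sigma_0(\sigma_n-\sigma_0)\le D$. So the inequality can fail only when the variance grows sharply over the horizon ($\sigma_n\gg\sigma_0$) while the mean barely changes, and the correlation is simultaneously close to $1$. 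We will formalise the hypothesis ``variances not excessively large compared to the means'' as this condition, or as the sufficient version $\rho\,\sigma_n/\sigma_0\le 1+D/\sigma_0^2$.

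\textbf{Step 3: the approximate regime.} Under these assumptions we obtain $0\le R\le 1$ exactly. When they hold only approximately, we will use the bound $|R|\le \sigma_n/\sigma_0$ to show that the violation of $[0,1]$ is of order $\max\bigl(0,\rho\sigma_n/\sigma_0-1-D/\sigma_0^2\bigr)$. This quantifies what ``approximately bounded'' means.

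The main obstacle is that the statement is qualitative. The real work is choosing precise hypotheses: $\rho\ge 0$ together with $\rho\,\sigma_0\sigma_n\le D+\sigma_0^2$, which is implied by $\sigma_n\le\sigma_0$ or by a sufficiently large mean shift. With those hypotheses fixed, the algebra above is immediate. We will also note that these conditions reduce to the stationary case $\lambda^{\mathtt{CLIPER}}=1-\rho\in[0,1]$ when $D=0$ and $\sigma_0=\sigma_n$.
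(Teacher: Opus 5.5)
Your proposal is correct and follows the same route as the paper's proof (positive denominator, sign of the numerator for the lower bound, sign of the correlation for the upper bound), but it is sharper: writing $\lambda^{\circlearrowleft}_{\mathtt{CLIPER}}=1-R$ gives the exact criteria $\lambda^{\circlearrowleft}_{\mathtt{CLIPER}}\le 1\iff\rho\ge 0$ (no variance balance is needed) and $\lambda^{\circlearrowleft}_{\mathtt{CLIPER}}\ge 0\iff\rho\,\sigma_0\sigma_n\le D+\sigma_0^2$, and it correctly identifies the mean shift $D$ as what must dominate the variance growth, rather than the magnitudes of $\mu(t)$ and $\mu(t+n\Delta t)$ invoked in the paper. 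Only your verbal description of the failure case is loose: $\rho\,\sigma_n/\sigma_0>1+D/\sigma_0^2$ can occur with a slight variance increase when $\rho\approx 1$, or with moderate $\rho$ when the variance grows sharply, so ``sharp growth'' and ``$\rho$ close to $1$'' are not jointly necessary.
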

\begin{proof}
The denominator is strictly positive, ensuring that $\lambda^{\circlearrowleft}_{\mathtt{CLIPER}}$ is well-defined for non-constant processes.  
The numerator is generally non-negative for stable processes, as the positive terms $\mu(t)^2$, $\mu(t+n\Delta t)^2$, and $\sigma(t)^2$ dominate unless $\rho(t, n\Delta t)$ is close to $1$ and the variances are excessively large compared to the means.  
Similarly, $\lambda^{\circlearrowleft}_{\mathtt{CLIPER}} \leq 1$ holds when the correlation is not too negative and variances are balanced.  
Thus, for most stable cyclostationary processes, $\lambda^{\circlearrowleft}_{\mathtt{CLIPER}} \in [0,1]$, but extreme cases with high variance or strong correlation may violate this bound.
\end{proof}

\begin{corollary}[Stationary Case]
Under stationary assumptions:
\begin{align*}
\mu(t) =& \mu(t + n\Delta t) = \mu, \\ \sigma(t) =& \sigma(t + n\Delta t) = \sigma, \\ \rho(t, n\Delta t) =& \rho(n\Delta t),
\end{align*}
the formula for $\lambda^{\mathtt{CLIPER}}$ simplifies to $\lambda^{\mathtt{CLIPER}} = 1-\rho(n\Delta t).$
\end{corollary}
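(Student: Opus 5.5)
The plan is to substitute the stationary assumptions directly into the closed-form expression for $\lambda^{\circlearrowleft}_{\mathtt{CLIPER}}$ obtained at the end of the preceding derivation, rather than redoing the $\mathtt{MSE}$ minimization. That expression is a ratio of two quantities built only from $\mu(t)$, $\mu(t+n\Delta t)$, $\sigma(t)$, $\sigma(t+n\Delta t)$ and $\rho(t,n\Delta t)$, so each term can be reduced separately.

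\emph{Step 1: the mean term.} Setting $\mu(t)=\mu(t+n\Delta t)=\mu$ gives $[\mu(t+n\Delta t)-\mu(t)]^2=0$. This term therefore drops out of both the numerator and the denominator.

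\emph{Step 2: the variance and correlation terms.} Setting $\sigma(t)=\sigma(t+n\Delta t)=\sigma$ turns the denominator into $\sigma^2$. With $\rho(t,n\Delta t)=\rho(n\Delta t)$, the numerator becomes $\sigma^2-\rho(n\Delta t)\,\sigma^2$.

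\emph{Step 3: conclude.} Since $\sigma^2>0$ for any non-constant process, which is the same non-degeneracy condition used in the convexity lemma, we may divide to obtain
\begin{equation*}
\lambda^{\mathtt{CLIPER}}=\frac{\sigma^2\bigl(1-\rho(n\Delta t)\bigr)}{\sigma^2}=1-\rho(n\Delta t).
\end{equation*}
This agrees with the definition of the stationary $\mathtt{CLIPER}$ operator.

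There is no real obstacle here. The only point needing care is that the stationary correlation $C(n\Delta t)/\sigma^2$ matches the cyclostationary one $C(t,n\Delta t)/[\sigma(t)\sigma(t+n\Delta t)]$ once the variances are equal. This is immediate from the entries of Table~\ref{tab:stats_summary}.

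As a cross-check, one can minimize $\mathbb{E}[(I(t+n\Delta t)-\lambda\mu-(1-\lambda)I(t))^2]$ directly in the stationary case. Its derivative gives $\lambda\sigma^2=\sigma^2-C(n\Delta t)$, which yields the same result.
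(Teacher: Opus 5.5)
Your proof is correct and follows the paper's (implicit) argument. The corollary comes from substituting $\mu(t)=\mu(t+n\Delta t)=\mu$, $\sigma(t)=\sigma(t+n\Delta t)=\sigma$ and $\rho(t,n\Delta t)=\rho(n\Delta t)$ into the closed form for $\lambda^{\circlearrowleft}_{\mathtt{CLIPER}}$ and then cancelling $\sigma^2$. Your direct-minimization cross-check is also sound, and it is insensitive to the paper's inconsistency between $\mu(t)$ (in the definition) and $\mu(t+n\Delta t)$ (in the derivation) as the climatological term, since the two coincide under stationarity.
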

This demonstrates that the cyclostationary formulation generalizes the stationary case.  
In the stationary case, $\lambda^{\mathtt{CLIPER}}$ depends only on the lagged correlation $\rho(n\Delta t)$, showing that it is purely a function of the temporal dependence structure of the process.

\section{Derivation for the \texttt{BLEND} Persistence Operator under Cyclostationary and Stationary Assumptions}
\label{BLEND_cyclo_statio}
The $\mathtt{BLEND}$ Persistence operator $\mathtt{P}_{\mathtt{BLEND}}$ predicts future values as a convex combination of the cyclic Persistence operator $\mathtt{P}^{\circlearrowleft}$ and the simple Persistence operator $\mathtt{P}$. For cyclostationary processes, it is expressed as:
\begin{equation}
\mathtt{P}^{\circlearrowleft}_{\mathtt{BLEND}}(I)(t) = (1 - \lambda^{\circlearrowleft}_{\mathtt{BLEND}}) \mathtt{P}^{\circlearrowleft}(I)(t) + \lambda^{\circlearrowleft}_{\mathtt{BLEND}} \mathtt{P}(I)(t),
\end{equation}
where $\lambda^{\circlearrowleft}_{\mathtt{BLEND}}$ is determined by minimizing the Mean Squared Error ($\mathtt{MSE}$) for a given forecast horizon $n\Delta t$.
\begin{proof}
The prediction error is given by:
\begin{equation}
\epsilon(t + n\Delta t) = I(t + n\Delta t) - \mathtt{P}^{\circlearrowleft}_{\mathtt{BLEND}}(I)(t),
\end{equation}
which allows us to define the $\mathtt{MSE} $  from:
\begin{align*}
\mathtt{MSE} =  \mathbb{E}\left[\left(I(t + n\Delta t) - (1 - \lambda^{\circlearrowleft}_{\mathtt{BLEND}}) I(t - T + n\Delta t) - \lambda^{\circlearrowleft}_{\mathtt{BLEND}} I(t)\right)^2\right].
\end{align*}
Expanding the square:
\begin{equation}
\begin{aligned}
&\mathtt{MSE}(\lambda^{\circlearrowleft}_{\mathtt{BLEND}})= \\ &\ \mathbb{E}[I(t + n\Delta t)^2] - 2(1 - \lambda^{\circlearrowleft}_{\mathtt{BLEND}})\mathbb{E}[I(t + n\Delta t) I(t - T + n\Delta t)] \\
& - 2\lambda^{\circlearrowleft}_{\mathtt{BLEND}} \mathbb{E}[I(t + n\Delta t) I(t)] + (1 - \lambda^{\circlearrowleft}_{\mathtt{BLEND}})^2 \mathbb{E}[I(t - T + n\Delta t)^2] \\
& + (\lambda^{\circlearrowleft}_{\mathtt{BLEND}})^2 \mathbb{E}[I(t)^2] + 2\lambda^{\circlearrowleft}_{\mathtt{BLEND}}(1 - \lambda^{\circlearrowleft}_{\mathtt{BLEND}})\mathbb{E}[I(t) I(t - T + n\Delta t)].
\end{aligned}
\end{equation}
Introducing key statistical definitions for covariances, correlations, and variance:

\begin{equation}
\begin{aligned}
&C(t + n\Delta t, T) =\\ &\rho(t + n\Delta t, T) \sigma(t + n\Delta t) \sigma(t + n\Delta t + T), \\
&= C(t + n\Delta t - T , T), \\
&= \rho(t + n\Delta t - T, T) \sigma(t + n\Delta t - T) \sigma(t + n\Delta t), \\
&C(t, n\Delta t) = \rho(t, n\Delta t) \sigma(t) \sigma(t + n\Delta t), \\
&C(t - T + n\Delta t, 0) = \sigma^2(t + n\Delta t - T), \\
&C(t, n\Delta t - T) = \rho(t, n\Delta t - T) \sigma(t) \sigma(t + n\Delta t - T), \\
&C(t, 0) = \sigma^2(t), \\
&C(t + n\Delta t, 0) = \sigma^2(t + n\Delta t),
\end{aligned}
\end{equation}
the properties of second-order cyclostationary processes can be expressed as:
\begin{equation}
\begin{aligned}
\mathbb{E}[I(t + n\Delta t) I(t - T + n\Delta t)] &= \rho(t + n\Delta t, T) \sigma(t + n\Delta t) \sigma(t + n\Delta t + T) + \mu(t + n\Delta t) \mu(t - T + n\Delta t), \\
\mathbb{E}[I(t + n\Delta t) I(t)] &= \rho(t, n\Delta t) \sigma(t) \sigma(t + n\Delta t) + \mu(t + n\Delta t) \mu(t), \\
\mathbb{E}[I(t - T + n\Delta t)^2] &= \mu^2(t - T + n\Delta t) + \sigma^2(t + n\Delta t - T), \\
\mathbb{E}[I(t - T + n\Delta t) I(t)] &=  \rho(t, n\Delta t - T) \sigma(t) \sigma(t + n\Delta t - T) + \mu(t - T + n\Delta t) \mu(t), \\
\mathbb{E}[I(t)^2] &= \mu^2(t) + \sigma^2(t),\\
\mathbb{E}[I(t +n\Delta t)^2] &= \mu^2(t + n\Delta t) + \sigma^2(t + n\Delta t).
\end{aligned}
\end{equation}
The optimal $\lambda^{\circlearrowleft}_{\mathtt{BLEND}}$ is obtained by minimizing the $\mathtt{MSE}$:
\begin{equation}
\lambda^{\circlearrowleft}_{\mathtt{BLEND}} =
\frac{
\begin{aligned}
(\mu(t) - \mu(t+n\Delta t))^2 + \sigma^2(t+n\Delta t) (1 - \rho(t+n\Delta t, T))+\cdots \\
 \cdots + \quad \sigma(t) \sigma(t+n\Delta t) (\rho(t, n\Delta t) - \rho(t, n\Delta t-T))
\end{aligned}}
{
\begin{aligned}
(\mu(t) - \mu(t+n\Delta t))^2 + \sigma^2(t) + \sigma^2(t+n\Delta t)+\cdots \\
 \cdots + \quad 2 \rho(t, n\Delta t-T) \sigma(t) \sigma(t+n\Delta t)
\end{aligned}}.
\end{equation}
\qed
\end{proof}
\begin{lemma}[Convexity of the $\mathtt{MSE} $  for the $\mathtt{BLEND}$ Operator]
\label{problem}
The Mean Squared Error ($\mathtt{MSE}$) for the $\mathtt{BLEND}$ Persistence Operator is convex with respect to $\lambda^{\circlearrowleft}_{\mathtt{BLEND}}$ for any cyclostationary or stationary process with finite variance and mean.
\end{lemma}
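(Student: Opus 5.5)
The plan is to write the $\mathtt{MSE}$ as an explicit quadratic in $\lambda=\lambda^{\circlearrowleft}_{\mathtt{BLEND}}$ and identify its leading coefficient as the second moment of a difference of two random variables, hence nonnegative. Introduce the abbreviations $Y=I(t+n\Delta t)$, $X_1=I(t)$ and $X_0=I(t-T+n\Delta t)$, so that the forecast error is
\begin{equation*}
\epsilon(\lambda)=Y-X_0-\lambda\,(X_1-X_0).
\end{equation*}
This follows directly by regrouping $(1-\lambda)X_0+\lambda X_1=X_0+\lambda(X_1-X_0)$ in the expanded $\mathtt{MSE}$ of the derivation preceding the statement.

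Squaring and taking expectations gives
\begin{equation*}
\mathtt{MSE}(\lambda)=\lambda^2\,\mathbb{E}\bigl[(X_1-X_0)^2\bigr]-2\lambda\,\mathbb{E}\bigl[(Y-X_0)(X_1-X_0)\bigr]+\mathbb{E}\bigl[(Y-X_0)^2\bigr].
\end{equation*}
All three expectations are finite because the process has finite mean and variance; by Cauchy--Schwarz, every cross moment is finite. The function is therefore a polynomial of degree at most two in $\lambda$. Its second derivative is the constant
\begin{equation*}
\frac{d^2\,\mathtt{MSE}}{d\lambda^2}=2A,\qquad A=\mathbb{E}\bigl[(X_1-X_0)^2\bigr]\ge 0.
\end{equation*}
As in the convexity lemma for $\mathtt{CLIPER}$, the second derivative test then gives convexity on all of $\mathbb{R}$.

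To connect with the closed form, I would expand $A$ using the cyclostationary moment identities already listed in the derivation:
\begin{equation*}
A=(\mu(t)-\mu(t-T+n\Delta t))^2+\sigma^2(t)+\sigma^2(t-T+n\Delta t)-2\rho(t,n\Delta t-T)\,\sigma(t)\sigma(t+n\Delta t-T).
\end{equation*}
By $T$-periodicity of $\mu$ and $\sigma$, this equals the denominator of $\lambda^{\circlearrowleft}_{\mathtt{BLEND}}$ with the sign $-2\rho$, as in the flowchart of Figure~\ref{fig:both_diagrams}. The sign $+2\rho$ in Eq.~\eqref{lambda_cyclo_blend} is then a typo.

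Positivity of $A$ is clearest not from this expanded form but from the representation $A=\mathrm{Var}(X_1-X_0)+(\mathbb{E}[X_1-X_0])^2$. Alternatively, it follows from $|\rho|\le 1$, which gives $\sigma^2(t)+\sigma^2(t')-2\rho\,\sigma(t)\sigma(t')\ge(\sigma(t)-\sigma(t'))^2\ge 0$.

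The main obstacle is not convexity itself but strictness and uniqueness of the minimizer. These hold precisely when $A>0$. $A=0$ occurs only when $I(t)=I(t-T+n\Delta t)$ almost surely, i.e.\ when the two regressors coincide: equal phase means, equal variances and $\rho(t,n\Delta t-T)=1$. In that degenerate case the $\mathtt{MSE}$ is affine in $\lambda$, and in fact constant, since the linear coefficient vanishes as well. It is then still convex but not strictly so, and every $\lambda$ is optimal. I would state this case explicitly, matching the collinearity caveat in the remarks on $\lambda_{\mathtt{BLEND}}$. I would then conclude that for $A>0$ the unique minimizer is $\lambda^\star=\mathbb{E}[(Y-X_0)(X_1-X_0)]/A$, which reproduces the derived formula.
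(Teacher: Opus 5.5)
Your convexity argument is correct and has the same skeleton as the paper's proof. You write the $\mathtt{MSE}$ as $A\lambda^2-2B\lambda+C$ and note that the second derivative is the constant $2A\ge 0$. Where you improve on the paper is the justification of $A\ge 0$. The paper says $A$ ``consists of squared terms and variances,'' which overlooks the cross term $-2\rho(t,n\Delta t-T)\,\sigma(t)\sigma(t+n\Delta t)$. Your identity $A=\mathbb{E}[(X_1-X_0)^2]$, or alternatively the bound $|\rho|\le 1$, actually proves it. Your discussion of $A=0$ also correctly qualifies the paper's unconditional claim of a ``unique minimizer.'' Your reading of the $+2\rho$ in Eq.~\eqref{lambda_cyclo_blend} as a sign slip is right: the paper's own $A$, its flowchart, and the stationary corollary all require $-2\rho$.

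One caution about your final sentence, which goes beyond the lemma: $\lambda^\star=\mathbb{E}[(Y-X_0)(X_1-X_0)]/A$ does \emph{not} reproduce the paper's closed form. Because $\mu$ is $T$-periodic, $\mathbb{E}[Y-X_0]=\mu(t+n\Delta t)-\mu(t-T+n\Delta t)=0$. Your numerator therefore reduces to $\mathrm{Cov}(Y-X_0,X_1-X_0)$ and contains no $(\mu(t)-\mu(t+n\Delta t))^2$ term, while the paper's numerator does. A deterministic periodic signal, where all covariance terms vanish, makes this concrete. There the $\mathtt{MSE}$ is $\lambda^2(\mu(t)-\mu(t+n\Delta t))^2$, minimized at $\lambda^\star=0$, yet the displayed formula returns $1$. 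So your expression is the correct minimizer, and the agreement you assert is false; carried out, your computation exposes a second discrepancy in the paper's formula besides the sign. This does not affect the convexity proof itself.
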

\begin{proof}
The $\mathtt{MSE}$ can be expressed as a quadratic function of $\lambda^{\circlearrowleft}_{\mathtt{BLEND}}$ with $\mathtt{MSE}(\lambda^{\circlearrowleft}_{\mathtt{BLEND}}) = A (\lambda^{\circlearrowleft}_{\mathtt{BLEND}})^2 - 2B \lambda^{\circlearrowleft}_{\mathtt{BLEND}} + C,$
where
\begin{align*}
&A = (\mu(t) - \mu(t+n\Delta t))^2 + \sigma^2(t) +  \\ & \sigma^2(t+n\Delta t) - 2 \rho(t, n\Delta t-T) \sigma(t) \sigma(t+n\Delta t).
\end{align*}
The second derivative is $\frac{d^2}{d(\lambda^{\circlearrowleft}_{\mathtt{BLEND}})^2} \mathtt{MSE} = 2A.$
Since $A$ consists of squared terms and variances, it satisfies $A \geq 0$ under standard conditions (correlation different than 1 and process not constant). Thus, $\mathtt{MSE}(\lambda^{\circlearrowleft}_{\mathtt{BLEND}})$ is convex, and has a unique minimizer.
\qed
\end{proof}
\begin{lemma}[Boundedness and Stability of $\lambda^{\circlearrowleft}_{\mathtt{BLEND}}$]
\label{problem2}
For cyclostationary processes, $\lambda^{\circlearrowleft}_{\mathtt{BLEND}}$ is generally bounded within $[0,1]$, ensuring convexity. However, instabilities may arise when the denominator approaches zero, particularly due to interactions involving $\rho(t, n\Delta t+T)$, $\rho(t, n\Delta t - T)$, $\sigma(t)$, $\sigma(t+n\Delta t)$, $\mu(t)$, and $\mu(t+n\Delta t)$. These instabilities are more likely when periodic variations in mean and variance are weak.
\end{lemma}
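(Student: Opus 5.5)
The argument works with the explicit quadratic form of the $\mathtt{MSE}$ from the derivation above, $\mathtt{MSE}(\lambda)=A\lambda^2-2B\lambda+C$, where $\lambda^{\circlearrowleft}_{\mathtt{BLEND}}=B/A$. Write $\delta=\mu(t)-\mu(t+n\Delta t)$, $s_0=\sigma(t)$, $s_1=\sigma(t+n\Delta t)$, and let $r_h$ denote the relevant correlations.

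\textbf{Step 1: the denominator $A$ as a variance.} $A$ is the expectation of the squared difference of the two regressors, after the phase alignment $I(t-T+n\Delta t)\equiv I(t+n\Delta t)$ in law on $\mathbb{Z}/T\mathbb{Z}$:
\begin{equation*}
A=\mathbb{E}\bigl[(I(t)-I(t-T+n\Delta t))^2\bigr]=\delta^2+s_0^2+s_1^2-2\rho(t,n\Delta t-T)s_0s_1 .
\end{equation*}
Then $A\ge \delta^2+(s_0-s_1)^2\ge 0$. Moreover $A=0$ exactly when $\delta=0$, $s_0=s_1$ and $\rho(t,n\Delta t-T)=1$. This isolates the instability locus named in the statement: weak periodic variation of the mean and variance, together with near-collinearity of $I(t)$ and its phase-aligned counterpart. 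Quantitatively, $A\ge \delta^2+(s_0-s_1)^2+2s_0s_1(1-\rho(t,n\Delta t-T))$, so $1/A$ stays bounded as long as any one of these three contrasts stays bounded away from zero.

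\textbf{Step 2: the bounds on $\lambda$.} The plan is to use the equivalences $\lambda\ge 0\iff B\ge 0$ and $\lambda\le 1\iff A-B\ge 0$, valid when $A>0$.

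\emph{The lower bound.} Write $B=\mathbb{E}[(I(t+n\Delta t)-I(t-T+n\Delta t))(I(t)-I(t-T+n\Delta t))]$. This is the covariance of the cyclic-persistence error with the regressor difference. Expanded, it reproduces the numerator:
\begin{equation*}
\delta^2+s_1^2(1-\rho(t+n\Delta t,T))+s_0s_1(\rho(t,n\Delta t)-\rho(t,n\Delta t-T)).
\end{equation*}
The first two terms are non-negative. The third is non-negative whenever the short-lag correlation dominates the long-lag one, $\rho(t,n\Delta t)\ge\rho(t,n\Delta t-T)$, which is the generic situation for $n\Delta t\ll T$.

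\emph{The upper bound.} Similarly, $A-B=\mathbb{E}[(I(t)-I(t+n\Delta t))(I(t)-I(t-T+n\Delta t))]$. This is non-negative when the simple-persistence error and the regressor difference are positively correlated. Expanded, it reads
\begin{equation*}
A-B=s_0^2-s_0s_1\rho(t,n\Delta t)-s_0s_1\rho(t,n\Delta t-T)+s_1^2\rho(t+n\Delta t,T),
\end{equation*}
which is $\ge 0$ e.g.\ under the symmetry assumptions of the simplified operator. Under those assumptions it reduces to $\sigma^2(1-\rho(t,n\Delta t))\ge 0$, giving $\tilde\lambda=\tfrac12(1+\rho)\in[0,1]$. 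This is why the statement only claims \emph{generally} bounded.

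\textbf{Step 3: stability.} Differentiating $B/A$ gives a perturbation bound
\begin{equation*}
|\Delta\lambda|\le \frac{|\Delta B|+|\lambda|\,|\Delta A|}{A},
\end{equation*}
with $\Delta A,\Delta B$ linear in the perturbations of $\mu$, $\sigma$ and the correlations. Combined with the lower bound on $A$ from Step 1, this shows sensitivity blows up like $1/A$ precisely in the quasi-stationary, high long-lag-correlation regime. That justifies the clipping remark.

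\textbf{Main obstacle.} The hard step is Step 2, because the sign conditions are genuinely not automatic. The expected approach is to state them as explicit sufficient conditions rather than prove universal bounds. A further subtlety is reconciling the $+2\rho$ versus $-2\rho$ sign in the denominator as printed. The plan is to rederive it from the expansion and use the variance form $A=\mathbb{E}[(\cdot)^2]$, which fixes it as $-2\rho(t,n\Delta t-T)s_0s_1$.
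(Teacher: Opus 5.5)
Your skeleton is the paper's: show the denominator degenerates in the quasi-stationary regime, then compare numerator and denominator. Several pieces are sound: Step~1, the equivalences $\lambda\ge0\iff B\ge0$ and $\lambda\le1\iff A-B\ge0$, and the $-2\rho$ correction of the denominator. They also repair the paper's non sequitur that a positive denominator by itself ``ensures'' $\lambda\ge0$.

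The error is in Step~2. Write $X=I(t)$, $Y=I(t+n\Delta t)$, $Z=I(t-T+n\Delta t)$. Then $B=\mathbb{E}[(Y-Z)(X-Z)]$ does \emph{not} expand to the printed numerator. Since $\mathbb{E}[Y]=\mathbb{E}[Z]=\mu(t+n\Delta t)$, the term $\mathbb{E}[Y-Z]\,\mathbb{E}[X-Z]$ vanishes. Equivalently, the $\mu$-products cancel when you set $\partial_\lambda\mathtt{MSE}=0$ using the paper's own moment formulas. Hence
\[ B=s_1^2\bigl(1-\rho(t+n\Delta t,T)\bigr)+s_0s_1\bigl(\rho(t,n\Delta t)-\rho(t,n\Delta t-T)\bigr), \]
with no $\delta^2$. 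Conversely, $A-B=\mathbb{E}[(X-Y)(X-Z)]$ carries a $+\delta^2$ that your display omits. Your two displayed expansions are really those of the printed numerator $N=B+\delta^2$ and of $A-N=\mathrm{Cov}(X-Y,X-Z)$. Your probabilistic representations, however, belong to $B$ and $A-B$. So the proof conflates two different coefficients. In fact the printed formula is not the $\mathtt{MSE}$ minimiser whenever $\mu(t)\neq\mu(t+n\Delta t)$: $N/A\to1$ as $|\delta|\to\infty$, whereas $B/A\to0$. The rederivation you planned for the sign of the denominator needed to be applied to the numerator as well.

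The conclusions survive, because $\delta^2\ge0$ only ever helps. For the true minimiser $B/A$, your lower-bound condition $\rho(t,n\Delta t)\ge\rho(t,n\Delta t-T)$ works verbatim, and $A-B\ge A-N$ makes the upper bound easier. For the printed $N/A$, the lower bound is easier and your $A-N$ expression is exact. But you must state which $\lambda^{\circlearrowleft}_{\mathtt{BLEND}}$ the lemma concerns and keep the expansions consistent with that choice.

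Two minor points. First, $A=0$ also when $\delta=0$ and $s_0=s_1=0$. This is the all-zero phase bin case, such as night-time irradiance, where $\rho$ is undefined and $\lambda=0/0$, so your ``exactly when'' must include it. Second, the Step~3 bound is only first order; the exact identity is $\Delta\lambda=(\Delta B-\lambda\,\Delta A)/(A+\Delta A)$.
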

\begin{proof}
The numerator of $\lambda^{\circlearrowleft}_{\mathtt{BLEND}}$ consists of squared differences and variance-weighted correlations, ensuring it remains bounded. The denominator,  
\begin{align*}
&(\mu(t) - \mu(t+n\Delta t))^2 + \sigma^2(t) + \sigma^2(t+n\Delta t) - \\ &2 \rho(t, n\Delta t - T) \sigma(t) \sigma(t+n\Delta t),
\end{align*}
is typically positive when periodic variations are strong. This ensures that $\lambda^{\circlearrowleft}_{\mathtt{BLEND}} \geq 0$.
To guarantee $\lambda^{\circlearrowleft}_{\mathtt{BLEND}} \leq 1$, the numerator must not exceed the denominator. This holds when correlations remain moderate and periodic variations in $\mu(t)$ and $\sigma(t)$ are significant. However, if $\rho(t, n\Delta t - T) \approx 1$ and mean differences vanish, the denominator may become small, potentially allowing $\lambda^{\circlearrowleft}_{\mathtt{BLEND}} > 1$. This occurs when the process becomes nearly stationary.
Thus, $\lambda^{\circlearrowleft}_{\mathtt{BLEND}}$ remains stable and convexly bounded for most cyclostationary processes but may exceed 1 when periodic structures weaken significantly.
\qed
\end{proof}

\begin{corollary}[Stationary Case]
Under stationary assumptions, where:
\begin{align*}
\mu(t) =& \mu(t + n\Delta t) = \mu, \\ \sigma(t) = \sigma(t + n\Delta t) =& \sigma, \\ \rho(t, n\Delta t) =& \rho(n\Delta t),
\end{align*}
the formula for $\lambda$ simplifies to:
\begin{equation}
\lambda_{\mathtt{BLEND}} = \frac{\rho(T) - \rho(n\Delta t) + \rho(n\Delta t - T) - 1}{2(\rho(n\Delta t - T) - 1)}.
\end{equation}
\end{corollary}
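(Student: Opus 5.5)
We prove the corollary by redoing the $\mathtt{MSE}$ minimization directly under stationarity, rather than substituting $\mu(t)=\mu$, $\sigma(t)=\sigma$ into Eq.~\eqref{lambda_cyclo_blend}. The sign of the $\rho(t,n\Delta t-T)$ term in the denominator is written differently in the main text, in Figure~\ref{fig:both_diagrams} and in the appendix. A direct computation sidesteps that bookkeeping, and afterwards we check that it agrees with the appendix version once stationary quantities are inserted.

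Write $h=n\Delta t$ and introduce the target and the two regressors
\[
Y=I(t+h),\qquad X_1=I(t),\qquad X_2=I(t-T+h).
\]
The forecast error is then
\[
\epsilon=(Y-X_2)-\lambda\,(X_1-X_2).
\]
Under stationarity all three variables have mean $\mu$. Hence $\epsilon$ has zero mean, the squared-mean contribution $(\mu(t)-\mu(t+h))^2$ vanishes, and $\mathtt{MSE}(\lambda)=\mathrm{Var}(\epsilon)$ is a quadratic in $\lambda$. By the convexity argument of Lemma~\ref{problem}, its unique minimizer is
\[
\lambda^\ast=\frac{\mathrm{Cov}(Y-X_2,\;X_1-X_2)}{\mathrm{Var}(X_1-X_2)}.
\]

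Next we express each covariance through the autocorrelation, using only the lag between the time indices, each variance being $\sigma^2$, and evenness $\rho(-k)=\rho(k)$.
\begin{itemize}
\item $\mathrm{Cov}(Y,X_1)=\sigma^2\rho(h)$, since the lag is $h$.
\item $\mathrm{Cov}(Y,X_2)=\sigma^2\rho(T)$, since the lag is $T$.
\item $\mathrm{Cov}(X_1,X_2)=\sigma^2\rho(h-T)$, since the lag is $T-h$.
\item $\mathrm{Var}(X_2)=\sigma^2$.
\end{itemize}
Expanding bilinearly gives the denominator
\[
\mathrm{Var}(X_1-X_2)=2\sigma^2\bigl(1-\rho(h-T)\bigr)
\]
and the numerator
\[
\mathrm{Cov}(Y-X_2,X_1-X_2)=\sigma^2\bigl(\rho(h)-\rho(T)-\rho(h-T)+1\bigr).
\]
Cancelling $\sigma^2$ and multiplying numerator and denominator by $-1$ yields exactly
\[
\lambda_{\mathtt{BLEND}}=\frac{\rho(T)-\rho(h)+\rho(h-T)-1}{2\bigl(\rho(h-T)-1\bigr)}.
\]
As a consistency check, the same result follows from the appendix expression by setting $\rho(t+h,T)=\rho(T)$, $\rho(t,h)=\rho(h)$, $\rho(t,h-T)=\rho(h-T)$, and taking the $-2\rho\,\sigma(t)\sigma(t+h)$ sign in the denominator.

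The main obstacle is not the algebra but two points of rigor. First, the lag-symmetry identification $\rho(t,h-T)=\rho(T-h)=\rho(h-T)$ must be stated explicitly, since it is what makes the cross term depend only on $|T-h|$. Second, we must assume the non-degeneracy condition $\rho(h-T)\neq1$, so that $X_1$ and $X_2$ are not perfectly collinear; this is precisely the case flagged in Lemma~\ref{problem2}, and without it the minimizer is not unique.
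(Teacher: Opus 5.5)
Your argument is correct, and each covariance in it checks out. It differs from the paper in route, not in result. The paper gives no separate computation for this corollary. It obtains the result by specializing the phase-dependent coefficient of Appendix~\ref{BLEND_cyclo_statio}: set $\mu(t)=\mu(t+n\Delta t)=\mu$ and $\sigma(t)=\sigma(t+n\Delta t)=\sigma$, replace $\rho(t+n\Delta t,T)$, $\rho(t,n\Delta t)$, $\rho(t,n\Delta t-T)$ by their lag-only counterparts, and the mean term drops and $\sigma^2$ cancels. You instead minimize $\mathrm{Var}\bigl((Y-X_2)-\lambda(X_1-X_2)\bigr)$ directly, via $\lambda^\ast=\mathrm{Cov}(U,V)/\mathrm{Var}(V)$.

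\textbf{What the paper's route buys.} The specialization is shorter. It also presents the stationary coefficient as a special case of the cyclostationary one, which is the point the paper wants to make.

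\textbf{Where it is fragile.} It is only as sound as the general formula. There the denominator is printed with $+2\rho(t,n\Delta t-T)\sigma(t)\sigma(t+n\Delta t)$, both in Eq.~\eqref{lambda_cyclo_blend} and at the end of Appendix~\ref{BLEND_cyclo_statio}. That sign would give $2\bigl(1+\rho(n\Delta t-T)\bigr)$. Only the $-2$ sign used in Lemma~\ref{problem} and Figure~\ref{fig:both_diagrams} reproduces the statement.

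\textbf{What your route buys.} Your self-contained computation avoids this sign problem. It also makes explicit what the substitution uses tacitly:
\begin{itemize}
\item correlations depending on the lag alone for all three lags $n\Delta t$, $T$ and $T-n\Delta t$, whereas the corollary's hypotheses only list the first;
\item evenness of $\rho$;
\item $\rho(n\Delta t-T)\neq 1$, needed for uniqueness of the minimizer.
\end{itemize}

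\textbf{A by-product.} Run your $U$/$V$ decomposition under periodicity alone. You get $\mathbb{E}[U]=0$ but $\mathbb{E}[V]=\mu(t)-\mu(t+n\Delta t)$. So the squared mean difference belongs only in the denominator of the general coefficient, not also in its numerator as printed. For instance, as the noise vanishes on a periodic signal with $\mu(t)\neq\mu(t+n\Delta t)$, the true optimum tends to $0$ because cyclic persistence becomes exact, whereas the printed formula tends to $1$. This does not affect the corollary, since that term vanishes under stationarity.
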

This derivation highlights the general applicability of the $\mathtt{BLEND}$ Persistence operator while addressing both cyclostationary and stationary cases with mathematical rigor.

\section{Properties of \texttt{BLEND} Operator Family}
\label{annexe:proprietes}
The family of operators $\mathtt{P}_{\mathtt{BLEND}}$, $\mathtt{P}^{\circlearrowleft}_{\mathtt{BLEND}}$, and $\tilde{\mathtt{P}}^{\circlearrowleft}_{\mathtt{BLEND}}$ is defined by the general form:
\begin{equation}
\mathtt{P}_{\mathtt{BLEND}}(I)(t) = \lambda I(t - T) + (1 - \lambda) I(t),
\label{add}
\end{equation}
where $\lambda \in [0, 1]$ is the combination coefficient, $T$ represents the periodicity, and $I(t)$ is the observed signal. This formulation endows the operator family with several key mathematical properties.
\begin{theorem}[Linearity]
The operator $\mathtt{P}_{\mathtt{BLEND}}$ is linear. For scalars $a, b \in \mathbb{R}$ and signals $I_1, I_2$, the operator satisfies:
\begin{equation}
\mathtt{P}_{\mathtt{BLEND}}(aI_1 + bI_2)(t) = a\mathtt{P}_{\mathtt{BLEND}}(I_1)(t) + b\mathtt{P}_{\mathtt{BLEND}}(I_2)(t).
\end{equation}
Moreover, $\mathtt{P}_{\mathtt{BLEND}}$ is distributive over addition:
\begin{equation}
\mathtt{P}_{\mathtt{BLEND}}(I_1 + I_2)(t) = \mathtt{P}_{\mathtt{BLEND}}(I_1)(t) + \mathtt{P}_{\mathtt{BLEND}}(I_2)(t).
\end{equation}
\end{theorem}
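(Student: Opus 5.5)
The plan is to verify linearity directly from the defining formula \eqref{add}, treating $\lambda$ and $T$ as fixed quantities independent of the input signal. Fix $t$, scalars $a,b\in\mathbb{R}$, and signals $I_1,I_2$. I would write $J = aI_1 + bI_2$, so that $J(t) = aI_1(t)+bI_2(t)$ and $J(t-T) = aI_1(t-T)+bI_2(t-T)$, since the shift $t\mapsto t-T$ is applied pointwise. Substituting into \eqref{add} gives
\begin{equation*}
\mathtt{P}_{\mathtt{BLEND}}(J)(t) = \lambda\bigl(aI_1(t-T)+bI_2(t-T)\bigr) + (1-\lambda)\bigl(aI_1(t)+bI_2(t)\bigr).
\end{equation*}
Regrouping the terms in $a$ and in $b$ by distributivity and commutativity in $\mathbb{R}$ yields $a[\lambda I_1(t-T)+(1-\lambda)I_1(t)] + b[\lambda I_2(t-T)+(1-\lambda)I_2(t)]$, which equals $a\mathtt{P}_{\mathtt{BLEND}}(I_1)(t)+b\mathtt{P}_{\mathtt{BLEND}}(I_2)(t)$. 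Distributivity over addition is then the special case $a=b=1$.

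Conceptually, $\mathtt{P}_{\mathtt{BLEND}} = \lambda S_T + (1-\lambda)\,\mathrm{Id}$, where $S_T$ is the shift operator, so the whole statement reduces to linearity of the identity and of the shift operator, together with closure of linear maps under linear combinations. The same argument applies verbatim to $\mathtt{P}^{\circlearrowleft}_{\mathtt{BLEND}}$ and $\tilde{\mathtt{P}}^{\circlearrowleft}_{\mathtt{BLEND}}$. In those cases the shift is by $T - n\Delta t$, and the weight may depend on the phase $t \bmod T$, but it is still a deterministic function of $t$ that does not depend on the signal.

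The only real obstacle is a hidden subtlety: in the paper, $\lambda$ is computed from the statistics ($\mu$, $\sigma$, $\rho$) of the process. If those were re-estimated from the input signal, the map $I\mapsto \mathtt{P}_{\mathtt{BLEND}}(I)$ would not be linear, since $\rho$ is invariant under scaling but not additive. I would therefore state explicitly that the theorem concerns the operator with $\lambda$ (respectively $\lambda(t)$) held fixed, for example calibrated once on a reference process. Under this convention the proof is the two-line computation above.
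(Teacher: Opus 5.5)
Your proof is correct and is the same as the paper's: substitute $aI_1+bI_2$ into the defining formula \eqref{add}, regroup the terms in $a$ and $b$, and obtain distributivity as the special case $a=b=1$. Your caveat that $\lambda$ (or $\lambda(t)$) must be held fixed rather than re-estimated from the input signal is a useful clarification that the paper leaves implicit, since a data-dependent $\lambda$ would break linearity.
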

\begin{proof}
The linearity follows directly from the definition of $\mathtt{P}_{\mathtt{BLEND}}$ as a weighted sum:
\begin{align*}
\mathtt{P}_{\mathtt{BLEND}}(aI_1 + bI_2)(t) =& \lambda (aI_1(t - T) + bI_2(t - T)) + \\ &(1 - \lambda)(aI_1(t) + bI_2(t)).
\end{align*}
Rearranging terms confirms the property:
\begin{align*}
\mathtt{P}_{\mathtt{BLEND}}(aI_1 + bI_2)(t) =& a\left[\lambda I_1(t - T) + (1 - \lambda)I_1(t)\right] + \\& b\left[\lambda I_2(t - T) + (1 - \lambda)I_2(t)\right],
\end{align*}
which implies $\mathtt{P}_{\mathtt{BLEND}}(aI_1 + bI_2)(t) = a\mathtt{P}_{\mathtt{BLEND}}(I_1)(t) + b\mathtt{P}_{\mathtt{BLEND}}(I_2)(t)$. Distributivity is a special case where $a = b = 1$. \qed
\end{proof}
\begin{theorem}[Flatness for Constant Signals]
For a constant signal $I(t) = c$, the operator is flat, returning:
\begin{equation}
\mathtt{P}_{\mathtt{BLEND}}(c)(t) = c.
\end{equation}
\end{theorem}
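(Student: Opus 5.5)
The plan is to substitute the constant signal into the general form \eqref{add} and use the fact that the two weights sum to one. For $I(t)=c$ for all $t$, the shifted signal is also constant, so $I(t-T)=c$. The same holds for any phase-aligned shift $I(t-T+n\Delta t)$, which covers the cyclic persistence term $\mathtt{P}^{\circlearrowleft}$ appearing in $\mathtt{P}^{\circlearrowleft}_{\mathtt{BLEND}}$ and $\tilde{\mathtt{P}}^{\circlearrowleft}_{\mathtt{BLEND}}$. Substituting, and factoring out $c$, gives
\begin{equation*}
\mathtt{P}_{\mathtt{BLEND}}(c)(t)=\lambda c+(1-\lambda)c=\bigl(\lambda+(1-\lambda)\bigr)c=c .
\end{equation*}

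The key observation is that this identity holds for every real $\lambda$. The constraint $\lambda\in[0,1]$ is never used, so the clipping issues raised in Lemma~\ref{problem2} play no role. Moreover, $\lambda$ may depend on the phase $t \bmod T$, as for $\lambda^{\circlearrowleft}_{\mathtt{BLEND}}$ and $\tilde{\lambda}^{\circlearrowleft}_{\mathtt{BLEND}}$. The argument is applied pointwise in $t$, so a phase-dependent weight changes nothing.

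The only genuine obstacle is well-definedness of $\lambda$ itself when the signal is constant. A constant process has $\sigma\equiv 0$, so the correlations entering $\lambda^{\circlearrowleft}_{\mathtt{BLEND}}$ and $\tilde{\lambda}^{\circlearrowleft}_{\mathtt{BLEND}}=\tfrac12(1+\rho)$ are undefined. By the convexity discussion in Lemma~\ref{problem}, the $\mathtt{MSE}$ is then identically zero in $\lambda$. I would handle this by stating that the result holds for any admissible (fixed or regularized/clipped) choice of $\lambda$, since the output does not depend on $\lambda$. Any convention, for example setting $\rho=0$ or clipping to $[0,1]$, therefore yields $c$.
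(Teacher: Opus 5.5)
Your proof is correct and is the same as the paper's: substitute $I\equiv c$ into \eqref{add} and use $\lambda+(1-\lambda)=1$. Your extra remarks are sound additions: no constraint on $\lambda$ is needed, phase-dependent weights are harmless, and for a constant process the output does not depend on the otherwise ill-defined $\lambda$. One small correction: the identically vanishing $\mathtt{MSE}$ follows by direct computation, not from Lemma~\ref{problem}, which explicitly excludes constant processes.
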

\begin{proof}
Substituting $I(t) = c$ into the definition \ref{add}, it comes:
\begin{equation}
\mathtt{P}_{\mathtt{BLEND}}(c)(t) = \lambda c + (1 - \lambda)c = c,
\end{equation}
proving the flatness of the operator for constant signals.
\qed
\end{proof}
\begin{corollary}[Shift for Affine Signals]
For an affine signal $I(t) = at + b$, the operator introduces a shift proportional to $-\lambda a T$, given by:
\begin{equation}
\mathtt{P}_{\mathtt{BLEND}}(at + b)(t) = at + b - \lambda a T.
\end{equation}
\end{corollary}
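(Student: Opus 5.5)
The plan is a direct substitution into the general form \eqref{add}, with the linearity theorem and the flatness theorem as a cross-check. Set $I(t)=at+b$. The only evaluations the operator needs are
\begin{equation*}
I(t)=at+b, \qquad I(t-T)=a(t-T)+b=at+b-aT.
\end{equation*}
Inserting these into $\mathtt{P}_{\mathtt{BLEND}}(I)(t)=\lambda I(t-T)+(1-\lambda)I(t)$ and grouping terms gives
\begin{equation*}
\lambda(at+b-aT)+(1-\lambda)(at+b)=at+b-\lambda aT,
\end{equation*}
which is exactly the claimed identity.

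For a presentation that shows why this is a corollary, I would use the linearity theorem proved above. Decompose $at+b=a\,\mathrm{id}(t)+b\cdot 1$, so that
\begin{equation*}
\mathtt{P}_{\mathtt{BLEND}}(at+b)(t)=a\,\mathtt{P}_{\mathtt{BLEND}}(\mathrm{id})(t)+b\,\mathtt{P}_{\mathtt{BLEND}}(1)(t).
\end{equation*}
The flatness theorem gives $\mathtt{P}_{\mathtt{BLEND}}(1)(t)=1$. A one-line computation gives $\mathtt{P}_{\mathtt{BLEND}}(\mathrm{id})(t)=\lambda(t-T)+(1-\lambda)t=t-\lambda T$. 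Combining the two yields $at+b-\lambda aT$.

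There is no real obstacle; the only point needing care is the convention. The statement must use the form \eqref{add}, where $\lambda$ weights the lag $t-T$ and the lag is $T$ rather than $T-n\Delta t$. Under the earlier convention, where $\lambda$ weights $I(t)$ and the lag is $t-T+n\Delta t$, the shift would instead read $(1-\lambda)a(n\Delta t-T)$. I would state this explicitly so the corollary is read with the definition of this appendix. I would also note that the result holds for any real $\lambda$, not only $\lambda\in[0,1]$, since convexity is never used.
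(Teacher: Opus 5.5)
Your proof is correct and takes essentially the paper's route: the paper writes out no separate proof, and the corollary is meant to follow immediately from substituting $I(t)=at+b$ into Eq.~\eqref{add}, or equivalently from linearity plus flatness, as in your second derivation. Your caveat is also accurate: the identity depends on the appendix convention ($\lambda$ weighting $I(t-T)$, lag $T$), and under the main-text definition the shift would instead be $(1-\lambda)a(n\Delta t-T)$.
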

\begin{remark}
The operator interpolates between the simple Persistence $\mathtt{P}(I)(t) = I(t)$ and the cyclic Persistence $\mathtt{P}^{\circlearrowleft}(I)(t) = I(t - T)$. By varying $\lambda$ between 0 and 1, $\mathtt{P}_{\mathtt{BLEND}}$ transitions smoothly between these two prediction models, capturing both the latest observation and the periodic behavior of the signal.
\end{remark}
\begin{theorem}[Stability]
The operator is stable when $\lambda \in [0, 1]$, as predictions are bounded by the values of $I(t)$ and $I(t - T)$. Specifically:
\begin{equation}
\min(I(t), I(t - T)) \leq \mathtt{P}_{\mathtt{BLEND}}(I)(t) \leq \max(I(t), I(t - T)).
\end{equation}
\end{theorem}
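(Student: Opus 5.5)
The plan is to work directly from the general form \eqref{add}, $\mathtt{P}_{\mathtt{BLEND}}(I)(t) = \lambda I(t-T) + (1-\lambda) I(t)$ with $\lambda\in[0,1]$. The key observation is that for such $\lambda$ both weights $\lambda$ and $1-\lambda$ are non-negative and sum to one. The prediction is therefore a convex combination of the two real numbers $a = I(t-T)$ and $b = I(t)$, and a convex combination of two reals always lies between their minimum and maximum.

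\textbf{Lower bound.} Set $m = \min(I(t), I(t-T))$ and $M = \max(I(t), I(t-T))$. Since $a \geq m$, $b \geq m$, $\lambda \geq 0$ and $1-\lambda \geq 0$, multiplying and adding gives
\begin{equation*}
\lambda a + (1-\lambda) b \;\geq\; \lambda m + (1-\lambda) m \;=\; m .
\end{equation*}
The last equality is exactly the flatness property already proved for constant signals, applied pointwise to $m$.

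\textbf{Upper bound.} Symmetrically, $a \leq M$ and $b \leq M$ give $\lambda a + (1-\lambda) b \leq M$. Combining the two yields the claimed sandwich. An equivalent route is to write
\begin{equation*}
\mathtt{P}_{\mathtt{BLEND}}(I)(t) = b + \lambda (a-b),
\end{equation*}
which makes the prediction a point on the segment from $b$ to $a$ parametrised by $\lambda\in[0,1]$.

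\textbf{Where care is needed.} The algebra itself is trivial; the only point requiring attention is the hypothesis $\lambda\in[0,1]$. I will remark that the argument uses the non-negativity of both weights essentially. If $\lambda<0$ or $\lambda>1$, as can occur for the unclipped $\lambda^{\circlearrowleft}_{\mathtt{BLEND}}$ (Lemma~\ref{problem2}), the prediction can leave the interval $[m,M]$. This justifies the clipping regularisation mentioned in the main text.

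For the simplified operator, I will note that $\tilde{\lambda}=\tfrac12(1+\rho)\in[0,1]$ automatically because $\rho\in[-1,1]$, so the bound holds for it unconditionally. The same argument also applies verbatim with $I(t-T)$ replaced by the phase-aligned value $I(t-T+n\Delta t)$ used in the main definitions.
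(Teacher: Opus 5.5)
Your proof is correct and takes the same route as the paper. The paper simply notes that for $\lambda\in[0,1]$ the weighted sum $\lambda I(t-T)+(1-\lambda)I(t)$ lies in the convex hull of $I(t)$ and $I(t-T)$, and your explicit lower and upper bounds just spell that out; your extra remarks on unclipped $\lambda$ and on $\tilde{\lambda}=\tfrac{1}{2}(1+\rho)\in[0,1]$ are accurate but not needed for the statement.
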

\begin{proof}
Since $\lambda \in [0, 1]$, the weighted sum $\lambda I(t - T) + (1 - \lambda) I(t)$ lies within the convex hull of $I(t)$ and $I(t - T)$.
\qed
\end{proof}
\begin{remark}
By minimizing the Mean Squared Error ($\mathtt{MSE}$) for the choice of $\lambda$, $\mathtt{P}_{\mathtt{BLEND}}$ ensures optimal predictions in the least-squares sense. The dynamic adjustment of $\lambda$ based on correlations or coefficients of variation allows the operator to adapt to the variability of the signal.
\end{remark}
\begin{remark}
The operator's robustness and versatility make it particularly suited for noisy periodic signals, where a balanced combination of simple and cyclic Persistence is necessary. However, care must be taken when $\lambda$ approaches instability due to near-perfect correlations or insufficient variability in the signal.
\end{remark}

\bibliographystyle{ieeetr}

\bibliography{Bib} 

\end{document}